\documentclass[letterpaper, 10 pt, conference]{IEEEtran}

\usepackage{cite}

\usepackage[english]{babel}
\usepackage[cmex10]{amsmath}
\usepackage{amssymb,amsfonts}
\usepackage{amsthm}

\usepackage{graphicx}
\usepackage{etoolbox}

\usepackage{pgfplots}
\usepackage{pgfplotstable}

\pgfplotsset{compat=newest,
    /pgfplots/ybar legend/.style={
    /pgfplots/legend image code/.code={%
       \draw[##1,/tikz/.cd,yshift=-0.25em]
        (0cm,0cm) rectangle (3pt,0.8em);},
   },
}

\newtoggle{isreport}
\togglefalse{isreport}

\newtoggle{isarxiv}
\toggletrue{isarxiv}

\newtheorem{lemma}{Lemma}

\newtheorem{prop}[lemma]{Proposition}

\theoremstyle{remark}
\theoremstyle{theorem}
\theoremstyle{theorem}\newtheorem{defn}[lemma]{Definition}


\newcommand{\paren}[1]{\left(#1\right)}

\newcommand{\set}[1]{\left\{#1\right\}}
\newcommand{\abs}[1]{\left|#1\right|}

\newcommand{\normm}[1]{{\left\vert\kern-0.25ex\left\vert\kern-0.25ex\left\vert #1
    \right\vert\kern-0.25ex\right\vert\kern-0.25ex\right\vert}}


\DeclareMathOperator{\spann}{span}

\DeclareMathOperator{\diag}{diag}

%
\newcommand{\R}{\mathbb R}

\newcommand{\eps}{\epsilon}
\newcommand{\lam}{\lambda}

\newcommand{\ul}[1]{\underline{#1}}
\newcommand{\ol}[1]{\overline{#1}}

\newcommand{\goesto}{\rightarrow}

\newcommand{\bs}{\backslash}


\newcommand{\calB}{\mathcal{B}}
\newcommand{\calC}{\mathcal{C}}

\newcommand{\calE}{\mathcal{E}}
\newcommand{\calF}{\mathcal{F}}
\newcommand{\calG}{\mathcal{G}}

\newcommand{\calL}{\mathcal{L}}

\newcommand{\calN}{\mathcal{N}}

\newcommand{\calP}{\mathcal{P}}

\newcommand{\calT}{\mathcal{T}}

\newcommand{\bff}[1]{{\bf #1}}

\usepackage{color}
\usepackage{enumerate}

\definecolor{myred}{RGB}{202,0,32}
\definecolor{myorange}{RGB}{244,165,130}
\definecolor{myviolet}{RGB}{194,165,207}
\definecolor{mycyan}{RGB}{146,197,222}
\definecolor{myblue}{RGB}{5,113,176}
\definecolor{mygreen}{RGB}{127,191,123}
\definecolor{mytile}{RGB}{27,120,55}

\usepackage{tikz}
\newcommand*\circled[1]{\tikz[baseline=(char.base)]{
            \node[shape=circle,draw,inner sep=0.4pt] (char) {#1};}}

\IEEEoverridecommandlockouts

\title{
Less is More: Real-time Failure Localization in Power Systems
}

\author{Linqi Guo, Chen~Liang, Alessandro~Zocca, Steven H.~Low,~and~Adam~Wierman
\thanks{This work has been supported by Resnick Fellowship, Linde Institute Research Award, NWO Rubicon grant 680.50.1529., NSF grants through PFI:AIR-TT award 1602119, EPCN 1619352, CNS 1545096, CCF 1637598, ECCS 1619352, CNS 1518941, CPS 154471, AitF 1637598, ARPA-E grant through award {DE-AR0000699} (NODES) and GRID DATA, DTRA through grant HDTRA 1-15-1-0003 and Skoltech through collaboration agreement 1075-MRA.}
\thanks{The authors are with the Department of Computing and Mathematical Sciences, California Institute of Technology, Pasadena,
CA, 91125, USA. Email: \texttt{\{lguo, cliang2, azocca, slow, adamw\}@caltech.edu}.}}

\begin{document}

\maketitle
\thispagestyle{empty}
\pagestyle{empty}

\begin{abstract}
Cascading failures in power systems exhibit non-local propagation patterns which make the analysis and mitigation of failures difficult. In this work, we propose a distributed control framework inspired by the recently proposed concepts of \textit{unified controller} and network \textit{tree-partition} that offers strong guarantees in both the mitigation and localization of cascading failures in power systems. In this framework, the transmission network is partitioned into several control areas which are connected in a tree structure, and the  unified controller is adopted by generators or controllable loads for fast timescale disturbance response. After an initial failure, the proposed strategy always prevents successive failures from happening, and regulates the system to the desired steady state where the impact of initial failures are localized as much as possible. For extreme failures that cannot be localized, the proposed framework has a configurable design, that progressively involves and coordinates more control areas for failure mitigation and, as a last resort, imposes minimal load shedding. We compare the proposed control framework with Automatic Generation Control (AGC) on the IEEE 118-bus test system. Simulation results show that our novel framework greatly improves the system robustness in terms of the $N-1$ security standard, and localizes the impact of initial failures in majority of the load profiles that are examined. Moreover, the proposed framework incurs significantly less load loss, if any, compared to AGC, in all of our case studies.
\end{abstract}

\section{Introduction}\label{section:intro}
Cascading failures in power systems propagate non-locally, making their analysis and mitigation difficult. This fact is illustrated by the sequence of events leading to the 1996 Western US blackout summarized in Fig.~\ref{fig:blackout}, in which successive failures happened hundreds of kilometers away from each other (e.g.~from stage \circled{3} to stage \circled{4} and from stage \circled{7} to stage \circled{8}). Non-local propagation makes it particularly challenging to design distributed controllers that reliably prevent and mitigate cascades in power systems. In fact, such control is widely considered impossible, even when centralized coordination is available \cite{bienstock2007integer,hines2007controlling}.

Current industry practice for mitigating cascading failures mostly relies on simulation-based contingency analysis, which focuses on a small set of most likely initial failures \cite{baldick2008initial}. Moreover, the size of the contingency set which is tested (and thus the level of security guaranteed) is often constrained by computational power, undermining its effectiveness in view of the enormous number of components in power networks. After a blackout event, a detailed study typically leads to a redesign of such contingency sets, potentially together with physical network upgrades and revision of system management policies and regulations \cite{hines2007controlling}.

The limitations of current practice have motivated a large body of literature to study and characterize analytical properties of cascading failures in power systems. This literature can be roughly categorized as follows: (a) applying Monte-Carlo methods to analytical models that account for the steady state power redistribution using DC \cite{carreras2002critical,anghel2007stochastic,yan2015cascading,bernstein2014power} or AC \cite{nedic2006criticality,rios2002value,song2016dynamic} flow models; (b) studying pure topological models built upon simplifying assumptions on the propagation dynamics (e.g., failures propagate to adjacent lines with high probability) and inferring component failure propagation patterns from graph-theoretic properties \cite{brummitt2003cascade,kong2010failure,crucitti2004topological}; (c) investigating simplified or statistical cascading failure dynamics \cite{dobson2005probilistic,wang2012markov,rahnamay2014stochastic,hines2017cascading}. 
In all these approaches, the non-local failure propagation often creates significant challenges when trying to make general inferences about failure patterns.

\begin{figure}[t]
\centering
{
\iftoggle{isarxiv}{
\includegraphics[width=0.38\textwidth]{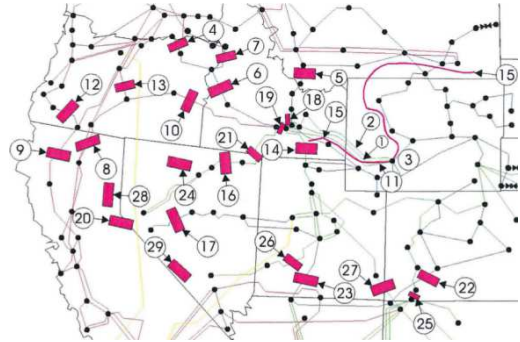}
}
{
\includegraphics[width=0.38\textwidth]{figs/us_blackout}
}
}
\caption{The sequence of events, indexed by the circled numbers, that lead to the Western US blackout in 1996. Adopted from \cite{hines2017cascading}.}
\label{fig:blackout}
\end{figure}

A new approach to address such challenges has emerged in recent years, aiming to improve the system resilience by partitioning the power network into smaller regions and localizing impacts of line failures within each region \cite{guo2017monotonicity,guo2018failure}. This approach is based on the \emph{tree-partition} of power grids (see Section \ref{section:tree_partition} for its definition and properties) and yields many interesting, sometimes counter-intuitive, insights for the planning and management of power systems. For instance, the case studies in \cite{guo2018failure} show that, by properly reducing system redundancy, one can improve system robustness against component failures -- the existence of redundancy turns out to play a prominent role in the non-local nature of failure propagation in power systems.

Unfortunately, this literature has yet to yield a fully satisfactory solution for mitigating and localizing failures, due to two main challenges. First, reducing redundancy as suggested in \cite{guo2017monotonicity,guo2018failure} may lead to single-point vulnerabilities, whose failure has a global impact on the whole system and can potentially cause significant load loss. Second, information on unfolding cascading failures are not fed back into relevant controllers that could adjust the network topology (and in particular its tree-partition). Therefore, after an initial failure is triggered, the strategy described in \cite{guo2018failure} guarantees that any successive failure will occur in the same region as the initial failure, but does not prevent or stop successive failures from happening in the first place. To overcome these drawbacks, there is need for new control designs that can ``close the loop'' and respond actively and promptly to different failures.

\textbf{Contributions of this paper:} \emph{We propose a distributed control strategy that operates on the frequency regulation timescale and offers proveable failure mitigation properties and localization guarantees.} To the best of our knowledge, this control paradigm is the first to leverage results from the frequency regulation literature in the context of cascading failures, bringing new perspectives and insights to both literatures. Our control scheme guarantees that failures do not propagate whenever there is a feasible way to avoid it (see Section \ref{section:basic} on the rigorous definition of such feasibility), and the impact of failures are localized as much as possible in a manner configurable by the system operator. 

We introduce the main idea of our control design in Section \ref{section:control_strategy}, whose failure mitigation and localization guarantees are established by the technical results in Section \ref{section:non_critical} and \ref{section:critical}. The key piece of our control builds upon the so-called Unified Controller (UC), a noval design approach to  frequency regulation \cite{zhao2014design,mallada2017optimal,zhao2016unified,zhao2018distributed}. Our design revolves around the new and powerful properties that emerge when the regions that UC manages form a tree-partition. More specifically, in Section \ref{section:non_critical}, we characterize how UC responds to an initial failure when it operates over a tree-partition, and prove that a non-critical failure is always mitigated and localized. Later, in Section \ref{section:critical}, we discuss how the tree-partition enables the system operator to explicitly specify the unfolding pattern of critical failures, and prove that UC can be extended to detect such scenarios as part of its normal operation.

In order to establish these results, we make use of the correspondence between swing/power flow dynamics and primal-dual algorithms to solve certain optimal dispatch problems, as developed in \cite{zhao2014design,li2016connecting,zhao2016unified}. Further, we prove new results on the UC optimization problem using novel algebraic characterizations of DC power flow equations derived from graph-theoretic properties of tree-partitions. Lastly, we apply the classical results from convex analysis and optimization theory to show that critical failures can always be detected by UC in a distributed fashion. 

In Section \ref{section:case_studies}, we compare the proposed control strategy with classical Automatic Generation Control (AGC) using the IEEE 118-bus test system. We demonstrate that by switching off only a small subset of transmission lines and adopting UC as the fast timescale controller, one can significantly improve the system robustness to failures in terms of the $N-1$ security standard. Moreover, in a majority of the load profiles that are examined, our control strategy further localizes the impact of initial failures to the regions where they occur, leaving the operating points of all other control areas unchanged. Lastly, we highlight that when load shedding is inevitable, the proposed framework incurs significantly less load loss compared to AGC, in all of our case studies.

\section{Preliminaries}\label{section:preliminaries}
In this section, we present our power system model, describe the failure propagation process thus derived, and discuss how they are related to previous models in literature.

\subsection{Power Grid Model and Fast Timescale Dynamics}\label{section:model}
We consider a power transmission network described by the graph $\calG=(\calN,\calE)$, where $\calN=\set{1,\ldots, \abs{\calN}}$ is the set of buses and $\calE\subset\calN \times \calN$ is the set of transmission lines. The terms bus/node and line/edge will be used interchangeably in the rest of the paper. An edge in $\calE$  between nodes $i$ and $j$ is denoted either as $e$ or $(i,j)$. We assign a fixed but arbitrary orientation to the edges in $\calE$, so that if $(i,j)\in\calE$ then $(j,i)\notin\calE$. Together with the variables associated with buses and transmission lines from Table \ref{table:notations}, the linearized swing and power flow dynamics can be written as
\begin{subequations}\label{eqn:swing_and_network_dynamics}
\begin{IEEEeqnarray}{rCll}
  M_j\dot{\omega}_j&=& r_j - d_j - D_j\omega_j - \sum_{e\in\calE}C_{je}f_e,&\quad j\in\calN \label{eqn:swing_dynamics}\\
  \dot{f}_{ij}&=&B_{ij}(\omega_i-\omega_j),\quad& (i,j)\in\calE \label{eqn:network_flow_dynamics}
\end{IEEEeqnarray}
\end{subequations}
We refer the readers to \cite{zhao2014design,zhao2016unified} for more detailed justification and derivation of this model. 

\begin{table}[t]
\vspace{.2cm}
\def\arraystretch{1.1}
\caption{Variables associated with buses and transmission lines.}\label{table:notations}
\begin{tabular}{m{2.7cm}|m{5cm}}
\hline
$\theta:=(\theta_j, j\in\calN)$ & bus voltage angle deviations\\
\hline
$\omega:=(\omega_j, j\in\calN)$ & bus frequency deviations \\
\hline
$r:=(r_j, j\in\calN)$ & injection deviations\\
\hline
$d:=(d_j, j\in\calN)$ & mechanical power injection adjustment for generator buses; controllable load adjustment for load buses\\
\hline
$\ol{d}_j, \ul{d}_j, j\in\calN$ & upper and lower limits for the adjustable injection $d_j$\\
\hline
$D_j\omega_j, j\in\calN$ & frequency sensitive generator/load dynamics\\
\hline
$f:=(f_e, e\in\calE)$ & branch flow deviations\\
\hline
$\ol{f}_e, \ul{f}_e, e\in\calE$ & upper and lower limits for branch flow deviations\\
\hline
$C\in \R^{\abs{\calN}\times \abs{\calE}}$ & incidence matrix of $\calG$: $C_{je}=1$ if $j$ is the source of $e$, $C_{je}=-1$ if $j$ is the destination of $e$, and $C_{je}=0$ otherwise\\
\hline
$B:=\diag(B_e, e\in\calE)$ & branch flow linearization coefficients that depend on nominal state voltage magnitudes and reference phase angles \\
\hline
\end{tabular}
\vspace{-.2cm}
\end{table}

\begin{defn}
A state $x^*:=(\omega^*, d^*, f^*)\in\R^{2\abs{\calN}+\abs{\calE}}$ is said to be an \textbf{\emph{equilibrium}} of \eqref{eqn:swing_and_network_dynamics} if the right hand sides of \eqref{eqn:swing_and_network_dynamics} are zero at $x^*$. 
\end{defn}

We emphasize that the $d_j$'s in \eqref{eqn:swing_dynamics} usually depend on the system states and may evolve by themselves in accordance to certain controller specific dynamics. The equilibrium defined above refers to the \emph{closed-loop} equilibrium. It is thus possible to engineer the equilibrium of \eqref{eqn:swing_and_network_dynamics} by adopting a different controller design for $d_j$, which in turn impacts how failures propagate in the system.

\subsection{Failure Occurrence and Propagation}\label{section:propagation}

In full generality, the control strategy that we introduce later applies to both generator failures and line failures. However, to simplify the presentation, in this paper we focus only on line failures as the generalization to bus failures is straightforward.\footnote{Our results readily apply to cases where the failure of a generator or substation can be emulated by the simultaneous failures of all the transmission lines connected to the corresponding bus.}

We describe the cascading failure process by keeping track of the set of failed lines at each stage, which are naturally nested and expanding as the stages progress. Overloaded lines are tripped at slower timescales than the dynamics \eqref{eqn:swing_and_network_dynamics}; the cascade stages reflect this fact and, indeed, at each stage we \emph{assume} the system reaches the new steady state equilibrium. The crux of our failure propagation model lies in the interplay between such slow timescale line tripping process and the fast timescale dynamics on system transient behavior described by \eqref{eqn:swing_and_network_dynamics}, as illustrated in Fig.~\ref{fig:model_interplay}.

\begin{figure}[t]
\vspace{.1cm}
\centering
{
\iftoggle{isarxiv}{
\begin{tikzpicture}
\begin{axis}[
        ymax    = 50.1,
        ymin    = 49.85,
        xmax    = 25,
        xmin    = 0,
        width   = 8 cm,
        height  = 3.4 cm,
        xlabel style = {font = \small},
        ylabel style = {font = \small},
        xticklabel style = {font = \small},
        yticklabel style = {font = \small},
    	xlabel = {Time},
    	ylabel = {Frequency (Hz)},
	]
    
	\pgfplotstableread{illustration_1.txt}\datatable
	\addplot +[
	mark = none,
        color       = myblue,
        line width  = 1pt,
        line cap    = round,
        line join   = round,
    ] table {\datatable};
    
	\pgfplotstableread{illustration_2.txt}\datatable
	\addplot +[
	mark = none,
        color       = myblue,
        line width  = 1pt,
        line cap    = round,
        line join   = round,
    ] table {\datatable};
    
    	\pgfplotstableread{illustration_3.txt}\datatable
	\addplot +[
	mark = none,
        color       = myblue,
        line width  = 1pt,
        line cap    = round,
        line join   = round,
    ] table {\datatable};
    
    	\pgfplotstableread{illustration_4.txt}\datatable
	\addplot +[
	mark = none,
        color       = myblue,
        dashed,
        line width  = 1pt,
        line cap    = round,
        line join   = round,
    ] table {\datatable};
    
    	\pgfplotstableread{illustration_5.txt}\datatable
	\addplot +[
	mark = none,
        color       = myblue,
        line width  = 1pt,
        line cap    = round,
        line join   = round,
    ] table {\datatable};

    \addplot + [mark=none, color = myviolet, dashed, line width = 1pt] coordinates {(5,49.85) (5,50.1)};
    \addplot + [mark=none, color = myviolet, dashed, line width = 1pt] coordinates {(10,49.85) (10,50.1)};
    \addplot + [mark=none, color = myviolet, dashed, line width = 1pt] coordinates {(15,49.85) (15,50.1)};
    \addplot + [mark=none, color = myviolet, dashed, line width = 1pt] coordinates {(20,49.85) (20,50.1)};
    	\begin{scope}[
  	every pin edge/.style={latex-},
  	pin distance=0.4cm
	]
	\node[coordinate,pin=  60:{\small{initial failure}}] at (axis cs:5,50) {};
	\end{scope}
    \node[] at (axis cs:2.5, 49.9){\small{stage 0}};
    \node[] at (axis cs:7.5, 49.9){\small{stage 1}};
    \node[] at (axis cs:12.5, 49.9){\small{stage 2}};
    \node[] at (axis cs:17.5, 49.9){\small{...}};
    \node[] at (axis cs:22.5, 49.9){\small{stage $N$}};
\end{axis}
\end{tikzpicture}
}{
\begin{tikzpicture}
\begin{axis}[
        ymax    = 50.1,
        ymin    = 49.85,
        xmax    = 25,
        xmin    = 0,
        width   = 8 cm,
        height  = 3.4 cm,
        xlabel style = {font = \small},
        ylabel style = {font = \small},
        xticklabel style = {font = \small},
        yticklabel style = {font = \small},
    	xlabel = {Time},
    	ylabel = {Frequency (Hz)},
	]
    
	\pgfplotstableread{illustration_1.txt}\datatable
	\addplot +[
	mark = none,
        color       = myblue,
        line width  = 1pt,
        line cap    = round,
        line join   = round,
    ] table {\datatable};
    
	\pgfplotstableread{illustration_2.txt}\datatable
	\addplot +[
	mark = none,
        color       = myblue,
        line width  = 1pt,
        line cap    = round,
        line join   = round,
    ] table {\datatable};
    
    	\pgfplotstableread{illustration_3.txt}\datatable
	\addplot +[
	mark = none,
        color       = myblue,
        line width  = 1pt,
        line cap    = round,
        line join   = round,
    ] table {\datatable};
    
    	\pgfplotstableread{illustration_4.txt}\datatable
	\addplot +[
	mark = none,
        color       = myblue,
        dashed,
        line width  = 1pt,
        line cap    = round,
        line join   = round,
    ] table {\datatable};
    
    	\pgfplotstableread{illustration_5.txt}\datatable
	\addplot +[
	mark = none,
        color       = myblue,
        line width  = 1pt,
        line cap    = round,
        line join   = round,
    ] table {\datatable};

    \addplot + [mark=none, color = myviolet, dashed, line width = 1pt] coordinates {(5,49.85) (5,50.1)};
    \addplot + [mark=none, color = myviolet, dashed, line width = 1pt] coordinates {(10,49.85) (10,50.1)};
    \addplot + [mark=none, color = myviolet, dashed, line width = 1pt] coordinates {(15,49.85) (15,50.1)};
    \addplot + [mark=none, color = myviolet, dashed, line width = 1pt] coordinates {(20,49.85) (20,50.1)};
    	\begin{scope}[
  	every pin edge/.style={latex-},
  	pin distance=0.4cm
	]
	\node[coordinate,pin=  60:{\small{initial failure}}] at (axis cs:5,50) {};
	\end{scope}
    \node[] at (axis cs:2.5, 49.9){\small{stage 0}};
    \node[] at (axis cs:7.5, 49.9){\small{stage 1}};
    \node[] at (axis cs:12.5, 49.9){\small{stage 2}};
    \node[] at (axis cs:17.5, 49.9){\small{...}};
    \node[] at (axis cs:22.5, 49.9){\small{stage $N$}};
\end{axis}
\end{tikzpicture}}}
\caption{An illustration of the failure propagation model.}
\label{fig:model_interplay}
\vspace{-.3cm}
\end{figure}
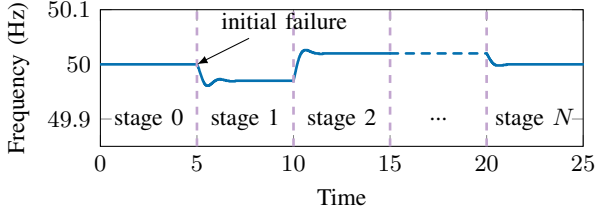

More specifically, each stage $n\in \set{1,2,\ldots, N}$ corresponds to a topology $\calG(n):=(\calN, \calE \bs \calB(n))$ in the failure unfolding process, where $\calB(n)$ is the set of all tripped lines at stage $n$. Within stage $n$, the system evolves according to the dynamics \eqref{eqn:swing_and_network_dynamics} on the topology $\calG(n)$, and converges to an equilibrium point $x^*(n)=(\omega^*(n), d^*(n), f^*(n))$ that depends on $\calG(n)$. 
If all the branch flows $f^*(n)$ are below the corresponding line ratings at equilibrium, then $x^*(n)$ is a secure operating point and the cascade stops. Otherwise, let $\calF(n)$ be the subset of lines whose branch flow exceeds the corresponding line ratings. The lines in $\calF(n)$ operate above their safety limits in steady state, so by the end of stage $n$ they are overheated and tripped, i.e., $\calB(n+1)=\calB(n)\cup \calF(n)$. Line overloads during the transient phase before the system converges to $x^*(n)$ are considered to be tolerable because the transient dynamics in \eqref{eqn:swing_and_network_dynamics} are not long enough to overheat a line \cite{zhao2016unified} (lasting only seconds to a few minutes). This process repeats for stage $n+1$ and so on.

\subsection{Recovering Previous Models}\label{section:previous_model}
Our failure propagation model brings new perspectives to the commonly studied models in literature, and reveals interesting insights on how certain  limitations from previous work can be circumvented. In particular, the extra freedom in choosing $d_j$ in the fast timescale dynamics \eqref{eqn:swing_and_network_dynamics} allows us to design and improve how the system reacts to line failures; thus achieving failure mitigation objectives directly using the well-known analytical tools from frequency regulation literature. 

As a first example, we show that adopting the classical droop control \cite{bergen2009power} in our framework recovers cascading failure models from previous literature such as \cite{soltan2015analysis, yan2015cascading, guo2017monotonicity, guo2018failure}.  Indeed, as shown in \cite{zhao2016unified}, the closed-loop equilibrium of \eqref{eqn:swing_and_network_dynamics} under droop control is the unique\footnote{Such uniqueness is up to a constant shift of all phase angles $\theta$. See \cite{zhao2016unified}.} optimal solution to the following optimization:
\begin{subequations}\label{eqn:droop_olc}
\begin{IEEEeqnarray}{ll}
\min_{\omega, d, f, \theta} \quad & \sum_{j\in\calN}\frac{d_j^2}{2K_j}+\frac{D_jw_j^2}{2}\label{eqn:droop_obj}\\
 \hspace{.2cm}\text{s.t.}& r - d - D\omega = Cf\label{eqn:droop_balance}\\
	& f - BC^T \theta = 0\label{eqn:droop_branch}\\
	&\ul{d}_j\le d \le \ol{d}_j, \quad j\in\calN, \label{eqn:droop_control_limits}
\end{IEEEeqnarray}
\end{subequations}
where $K_j$'s are the generators' participation factors \cite{bergen2009power}. By plugging \eqref{eqn:droop_branch} into \eqref{eqn:droop_balance}, it is routine to check that any feasible point $x=(\omega, d, f, \theta)$ of \eqref{eqn:droop_olc} satisifes $\sum_j r_j=\sum_j (d_j + D_j\omega_j)$. 
As a result, the Cauchy-Schwarz inequality implies that
\begin{IEEEeqnarray*}{rCl}
\Bigg( \sum_{j\in\calN} r_j \Bigg)^2&=& \Bigg[ \sum_{j\in\calN} \paren{d_j+D_j\omega_j}\Bigg]^2\\
&\le & \sum_{j\in\calN}\paren{\frac{d_j^2}{2K_j}+\frac{D_j\omega_j^2}{2}}\sum_{j\in\calN}\paren{2K_j+2D_j},
\end{IEEEeqnarray*}
and equality holds if and only if 
\begin{equation}\label{eqn:droop_optimal_point}
d_j=\frac{K_j}{\sum_j \paren{K_j+D_j}}\sum_j r_j, \quad \omega_j=\frac{\sum_j r_j}{\sum_j \paren{K_j+D_j}}.
\end{equation}
Therefore, if the control limits \eqref{eqn:droop_control_limits} are not active, \eqref{eqn:droop_optimal_point} is always satisfied at the optimal point $x^*=(\omega^*, d^*, f^*, \theta^*)$.

Now consider a line $e$ being tripped from the transmission network $\calG$, and for simplicity assume the control limits \eqref{eqn:droop_control_limits} are not active. If $e$ is a bridge\footnote{A line $e$ is said to be a \textit{bridge} for $\calG$ if it is a cut-edge for $\calG$, i.e., if the removal of $e$ from $\calG$ disconnects $\calG$ into two components, usually referred to as islands in power system literature. See \cite{bondy1976graph} for its rigorous definition.}, the tripping of $e$ results in two islands of $\calG$, say $\calN_1$ and $\calN_2$, and two optimization problems \eqref{eqn:droop_olc} corresponding to $\calN_1$ and $\calN_2$ respectively. For $l=1,2$, $\sum_{j\in\calN_l} r_j$ represents the total power imbalance in $\calN_l$, and therefore \eqref{eqn:droop_optimal_point} implies that droop control adjusts the system injection so that the power imbalance is distributed to all generators proportional to their participation factors in both $\calN_1$ and $\calN_2$. If $e=(i,j)$ is not a bridge, denoting the original flow on $e$ before it is tripped as $f_e$, then $r_i=f_e$, $r_j=-f_e$ and $r_k=0$ otherwise. As a result, we have $\sum_{j\in\calN}r_j = 0$ in this case and thus \eqref{eqn:droop_optimal_point} implies the system operating point remains unchanged. This control recovers exactly the failure propagation dynamics in \cite{soltan2015analysis, yan2015cascading, guo2017monotonicity, guo2018failure}. Moreover, one can show that this still holds when \eqref{eqn:droop_control_limits} is active with a more involved analysis on the KKT conditions of \eqref{eqn:droop_olc}.

We thus see that this droop control mechanism underlies some of the previous results in the literature on cascading failures in power systems. In paricular, this suggests that, by using a different control design for $d_j$, we can obtain different and potentially better system behaviors after a line failure. For instance, it is shown in \cite{guo2018failure} that bridge failures under droop control have a global impact, while (as we outline in Section \ref{section:non_critical}) the impact of bridge failures can in fact be localized using UC. Our new proposed control strategy leverages precisely this extra freedom in chooseing the $d_j$'s to offer stronger guarantees in both failure mitigation and localization compared to previous work \cite{guo2018failure}.

\section{Tree-partitions and the Unified Controller}\label{section:basic}
The tree-partition and UC have emerged recently as tools to improve power system robustness \cite{guo2017monotonicity, guo2018failure, zhao2016unified}. These concepts have been investigated separately in the literature as they operate at different timescales and aim to solve different problems. Our model brings them together into a novel framework, which allows us to obtain new results combining their strengths and, at the same time, provides new insights for both. In this section, we review these concepts and explain how they come together in this work.

\subsection{The Tree-partition} \label{section:tree_partition}
\begin{figure}
\centering
\iftoggle{isarxiv}{
\includegraphics[width=0.325\textwidth]{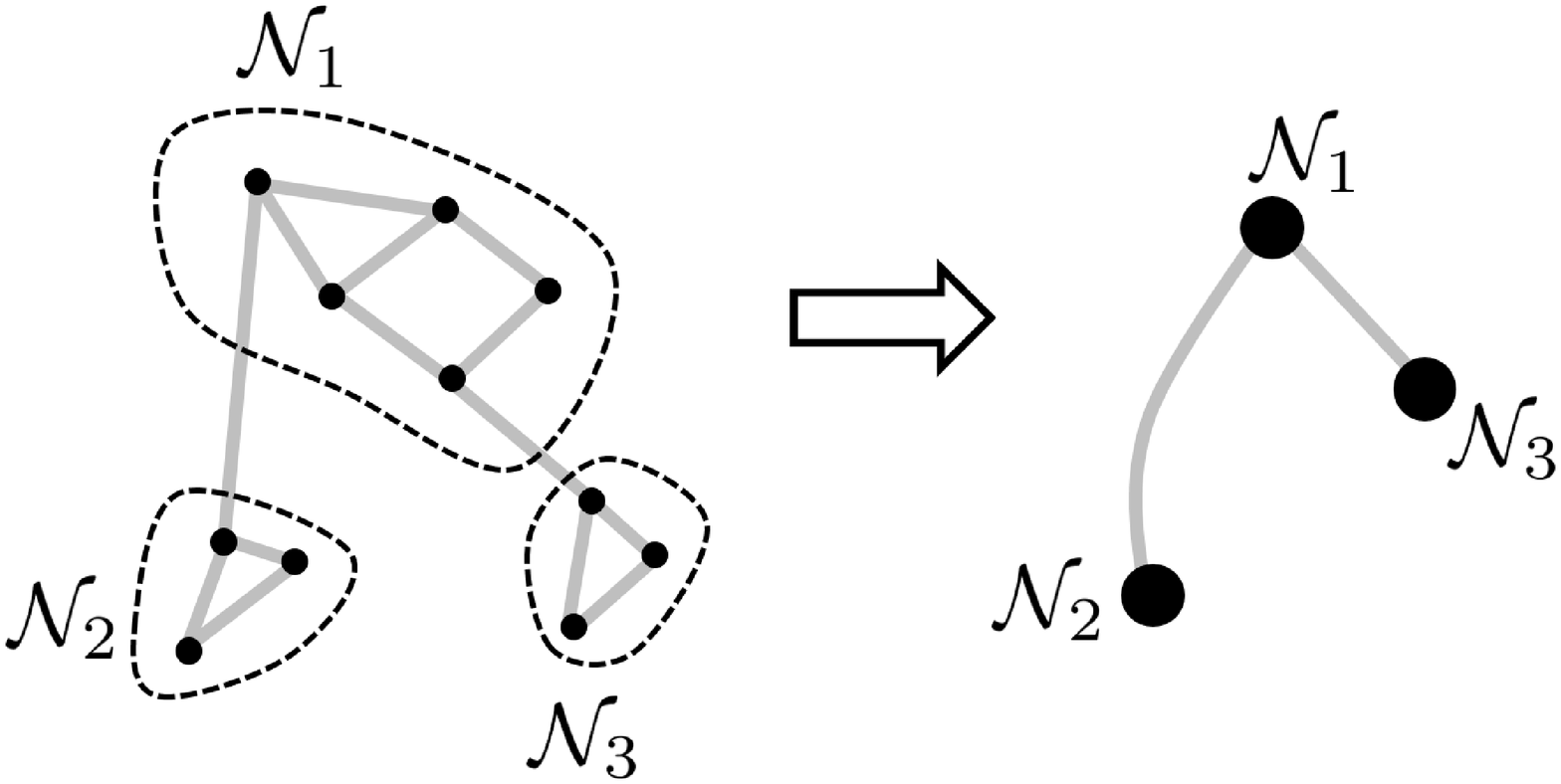}
}{
\includegraphics[width=0.325\textwidth]{figs/tree_partition.png}
}
\caption{The construction of $\calG_\calP$ from $\calP$.}\label{fig:tree_partition}
\end{figure}

Given a power system $\calG=(\calN, \calE)$, a collection $\calP=\set{\calN_1,\calN_2,\cdots, \calN_k}$ of subsets of $\calN$ is said to form a \emph{partition} of $\calG$ if $\calN_i\cap\calN_j=\emptyset$ for $i\neq j$ and $\cup_{i=1}^k \calN_i=\calN$.  For each partition $\calP=\set{\calN_1,\calN_2,\cdots, \calN_k}$, we can define a reduced multi-graph $\calG_\calP$ from $\calG$ as follows (see Fig.~\ref{fig:tree_partition}). The node set of $\calG_\calP$ is in one-to-one correspondence with $\set{\calN_1,\calN_2,\cdots, \calN_k}$, in the sense that we collapse each subset $\calN_i$ into a ``super node'' of $\calG_\calP$. We then add an undirected edge connecting the super nodes $\calN_i$ and $\calN_j$ for each pair of nodes $v, w\in\calN$ if $v\in\calN_i$, $w\in\calN_j$ and they are connected in $\calG$, i.e. $(v,w)\in\calE$ or $(w,v)\in\calE$. Note that multiple edges are added when multiple pairs of such $v,w$ exist.

\begin{defn}\label{defn:tree_partition}
A partition $\calP^{\text{\emph{tree}}}=\set{\calN_1,\calN_2,\cdots,\calN_k}$ of $\calG$ is said to form a \textbf{\emph{tree-partition}} if the reduced multi-graph $\calG_{\calP^{\text{\emph{tree}}}}$ is a tree. In this case, the subsets $\calN_l$ are referred to as tree-parition \textbf{\emph{regions}} and the edges of $\calG$  whose endpoints belong to different regions are called \textbf{\emph{bridges}}.
\end{defn}

It is shown in \cite{guo2018failure} that each graph $\calG$ has a unique irreducible tree-partition, which can be computed in linear time, and for this tree-partition, the concept of bridges defined above coincides with that from classical graph theory literature, such as \cite{bondy1976graph}. The tree-partition of a transmission network encodes rich information about how failures propagate (under droop control, as we discussed in Section \ref{section:previous_model}).

\subsection{The Unified Controller (UC)}\label{section:UC}
UC is a control approach recently proposed in the frequency regulation literature \cite{zhao2014design, mallada2017optimal, zhao2016unified,zhao2018distributed}. Compared to classical droop control or Automatic Generation Control (AGC) \cite{bergen2009power}, UC aims to achieve primary frequency control, secondary frequency control, and congestion management simultaneously at the frequency control timescale.

The key feature of UC that we use here is that the closed-loop equilibrium of \eqref{eqn:swing_and_network_dynamics} under UC solves the following optimization:
\begin{subequations}\label{eqn:uc_olc}
\begin{IEEEeqnarray}{ll}
\min_{f, d, \theta} \quad& \sum_{j\in\calN}c_j(d_j) \label{eqn:uc_obj}\\
\hspace{.1cm}\text{s.t.} & r - d - Cf = 0 \label{eqn:uc_balance}\\
& f = BC^T\theta  \label{eqn:dcflow}\\
& ECf = 0\label{eqn:uc_ace}\\
& \ul{f}_{e}\le f_{e}\le \ol{f}_{e}, \quad e\in\calE \label{eqn:line_limit}\\
& \ul{d}_{j}\le d_j\le \ol{d}_j, \quad j \in\calN, \label{eqn:control_limit}
\end{IEEEeqnarray}
\end{subequations}
where $c_j(\cdot)$'s are associated cost functions that penalize deviations from last optimal dispatch point (and hence attain minimum at $0$), \eqref{eqn:uc_balance} guarantees power balance at each bus,  \eqref{eqn:dcflow} is the DC power flow equation, \eqref{eqn:uc_ace} enforces zero area control error \cite{bergen2009power}, \eqref{eqn:line_limit} and \eqref{eqn:control_limit} are the flow and control limits. The matrix $E$ encodes control area information as follows: Given a partition $\calP^{\text{UC}}= \set{\calN_1,\calN_2,\cdots, \calN_k}$ of $\calG$ that specifies the control areas in secondary frequency control, $E\in\set{0,1}^{|\calP^{\text{UC}}|\times \abs{\calN}}$ is defined by $E_{l,j}=1$ if bus $j$ is in region $\calN_l$ and $E_{l,j}=0$ otherwise. An edge $e\in\calE$ is called a \emph{tie-line} if its endpoints belong to different regions in $\calP^{\text{UC}}$ \cite{bergen2009power,zhao2016unified}. 
As a result, the $l$-th row of $ECf=0$ ensures that the branch flow deviations on the tie-lines connected to $\calN_l$ sum to zero.

UC is designed so that its controller dynamics combined with the system dynamics \eqref{eqn:swing_and_network_dynamics} form a variant of projected primal-dual algorithms to solve \eqref{eqn:uc_olc}. It is shown in \cite{zhao2016unified,zhao2018distributed} that when the optimization problem \eqref{eqn:uc_olc} is feasible, under mild assumptions UC is globally stable and converges to the optimal point of \eqref{eqn:uc_olc}. 
This optimal point is unique (up to a constant shift of $\theta$) if the cost functions $c_j(\cdot)$ are strictly convex. We refer readers to \cite{zhao2016unified,zhao2018distributed} for its exact controller design and analysis.

\subsection{Connecting UC to Tree-partition}\label{section:connecting_uc_to_tree}
In the previous subsections we mention two distinct partitions of a power network: the tree-partition $\calP^{\text{tree}}$ and the control area partition $\calP^{\text{UC}}$. In general, $\calP^{\text{tree}}$ and $\calP^{\text{UC}}$ can be different. However, when they do coincide, the underlying power grid inherits analytical properties from both tree-partition and UC, making the system particularly robust against failures. Our proposed control strategy leverages this connection, as we present in more detail in Section \ref{section:control_strategy}, and we henceforth assume that $\calP^{\text{tree}}=\calP^{\text{UC}}$. Under this assumption, the bridges and the tie-lines of the power network $\calG$ also coincide.

\begin{defn}
Given a cascading failure process described by $\calB(n), n\in\set{1,2, \ldots, N}$, the set $\calB(1)$ is said to be its \textbf{\emph{initial failure}}.
\end{defn}

In a power system, it is reasonable to expect that different initial failures can have different levels of impact on the rest of the network. For instance, the disconnection of a single solar panel from the grid is unlikely to cause any disruption to the system operation, while the failure of a transmission line that connects a major generator to the grid may incur significant load shedding. We thus need to distinguish different types of failures and ensure the proposed control scheme reacts accordingly.

\begin{defn} \label{def:critical}
An initial failure $\calB(1)$ is said to be \textbf{\emph{critical}} if the UC optimization \eqref{eqn:uc_olc} is infeasible over $\calG(1):=(\calN, \calE\bs\calB(1))$, or \textbf{\emph{non-critical}} if it is not critical.
\end{defn}

To formally state our localization result, we define the following concept to clarify the precise meaning of a region being ``local'' with respect to an initial failure.

\begin{defn}
Given an initial failure $\calB(1)$, we say that a tree-partition region $\calN_l$ is \textbf{\emph{associated}} with $\calB(1)$ if there exists an edge $e=(i,j)\in\calB(1)$ such that either $i\in\calN_l$ or $j\in\calN_l$.
\end{defn}

As we discuss in Section \ref{section:control_strategy}, our control strategy provides strong guarantees in mitigation and localization for both non-critical and critical failures, in a way that only the operation of the associated regions are adjusted whenever possible.

\section{Proposed Control Strategy}\label{section:control_strategy}
\begin{figure}[t]
\centering
{
\iftoggle{isarxiv}{
\includegraphics[width=0.35\textwidth]{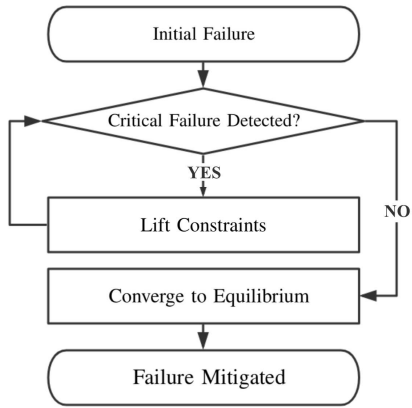}
}{
\includegraphics[width=0.35\textwidth]{figs/flow_chart}
}
}
\caption{Flowchart of the events after an initial failure under the proposed control strategy.}\label{fig:flow_chart}
\end{figure}

Our control strategy revolves around the new and powerful properties of the power system that emerge when the control areas that UC operates over form a tree-partition of the network. In this section, we outline how this strategy can be implemented, in both the planning phase, where a tree-partition structure of the control areas should be created, and the operating phase, during which UC actively monitors and reacts to line failures. Fig.~\ref{fig:flow_chart} illustrates the sequence of events after an initial failure in the proposed control strategy.

\subsection{Planning Phase: Tree-partition of Control Areas}
Power networks are often comprised of multiple control areas, each of which is managed by an independent system operator (ISO). Although these areas exchange power with each other as prescribed by economic dispatch, their operations are relatively independent and it is desirable to ensure system disturbances in one area do not have a significant impact on the others. This is usually achieved via the zero area control error constraint in secondary frequency control \cite{bergen2009power}, and is enforced in UC with \eqref{eqn:uc_ace}. As we discuss in Section \ref{section:connecting_uc_to_tree}, such control areas typically do not form a tree-partition of the transmission network, as having redundant lines is believed to be a crucial part in maintaining $N-1$ security of the power system \cite{bergen2009power,bienstock2007integer,hines2007controlling}.

In order to implement our control strategy, we propose to create a tree-partition whose regions are precisely the control areas over which UC operates. This can be done by switching off a subset of the tie-lines so that the reduced multi-graph obtained from the control area partition forms a tree. The switching actions only need to be carried out in the planning phase as line failures that occur during the operating phase do not affect the tree-partition already in place\footnote{In fact, in certain cases line failures lead to ``finer'' tree partitions as more regions are potentially created when lines are removed from service.}. It is interesting to note that, when the subset of lines to switch off is chosen carefully, this action not only helps localize the impact of line failures, but can also improve the system reliability in the $N-1$ security sense. This seemingly counter-intuitive phenomenon is illustrated by our case studies in Section \ref{section:case_study_n_1}.

\subsection{Operating Phase: Extending the Unified Controller}
Once a tree-partition is formed, the power network under UC operates as a closed-loop system and responds to disturbances such as transmission line failure or loss of generator/load in an automonous manner. In normal conditions where the system disturbances are insignificant, UC always drives the power network back to an equilibrium point that can be interpreted as an optimal solution of \eqref{eqn:uc_olc}. This is the case, for instance, when non-critical failures (see Definition \ref{def:critical}) happen and therefore, such failures are always properly mitigated.

However, in extreme scenarios where a major disturbance (e.g.~a critical failure) affects the system, the optimization problem \eqref{eqn:uc_olc} that UC aims to solve can be infeasible. In other words, it is physically impossible for UC to achieve all of its control objectives after such a disturbance. This causes UC to be unstable (see Proposition \ref{prop:dual_to_infinity}) and, further, leads to successive failures or even large scale outages. As such, there is a need to extend the version of UC proposed in \cite{zhao2016unified, zhao2018distributed} with two features: (a) a critical failure detection component that monitors the system states and ensures UC is aware of such extreme situation promptly when it happens; (b) a constraint lifting component that responds to critical failures by proactively relaxing certain goals that UC tries to achieve, and ensures system stability can be reached at minimal cost. 

Our technical results in Section \ref{section:critical_detection} suggest a way to implement both components as part of the normal operation of UC. System operators can prioritize different control areas by specifying the sequence of constraints to lift in response to extreme events. This allows the non-associated regions to be progressively involved and coordinated in a desired pattern when mitigating critical failures. We present and discuss some potential schemes in Section \ref{section:critical_contraint_lift}.
\subsection{Guaranteed Mitigation and Localization}
As we show in detail in Sections \ref{section:non_critical} and \ref{section:critical}, our control strategy provides strong guarantees in mitigation and localization for both non-critical and critical failures. More specifically, the proposed control strategy ensures that, (a) non-critical failures are always fully mitigated by the associated regions, and the operating points for non-associated regions are not impacted at all; (b) critical failures are guaranteed to be mitigated with certain constraints in \eqref{eqn:uc_olc} being lifted, in a progressive manner specified by the system operator. Thus the proposed strategy always prevents successive failures from happening, while localizing the impact of the initial failures as much as possible.

\section{Localizing Non-critical Failures} 
\label{section:non_critical}
In this section, we consider non-critical failures, as defined in Section \ref{section:basic}, and prove that such failures are always fully mitigated within the associated regions.

We first characterize how the system operating point shifts in response to such failures. Recall that if an initial failure $\calB(1)$ is non-critical, the UC optimization \eqref{eqn:uc_olc} is feasible and thus the new system operating point $x^*(1):=(\omega^*(1), d^*(1), f^*(1), \theta^*(1))$\footnote{We add $\theta$ to the state space of \eqref{eqn:swing_and_network_dynamics} when the phase angle is relevant.} under UC control satisfies all the constraints in \eqref{eqn:uc_olc}. In particular, none of the line limits in \eqref{eqn:line_limit} is violated at $x^*(1)$, i.e.~$x^*(1)$ is a secure operating point and the cascade stops, namely $\calF(1) = \emptyset$.


\begin{lemma}\label{lemma:zero_branch_deviation}
Given a non-critical initial failure $\calB(1)$, the new operating point $x^*(1)$ prescribed by the UC satisfies $f^*_e(1)=0$ for every bridge $e$.
\end{lemma}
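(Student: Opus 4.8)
The plan is to use only the zero-area-control-error constraint~\eqref{eqn:uc_ace}, together with the structural fact that under our standing assumption $\calP^{\text{tree}}=\calP^{\text{UC}}$ the bridges of $\calG$ coincide with the tie-lines and therefore form a tree on the regions. Since $\calB(1)$ is non-critical, \eqref{eqn:uc_olc} is feasible over $\calG(1)$, so the prescribed point $x^*(1)$ is feasible; in particular it satisfies $E C^{(1)} f^*(1)=0$, where $C^{(1)}$ is the incidence matrix of $\calG(1)$. None of the other ingredients of \eqref{eqn:uc_olc} (the DC flow equation, the nodal balance, the box constraints, or even optimality of $x^*(1)$) will be needed: the argument in fact shows $f^*_e=0$ for every bridge $e$ at \emph{any} feasible flow.

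The first step is a cut identity for the incidence matrix. Because each row of $E$ is the indicator vector of a region, for any set $S$ of regions, writing $A=\bigcup_{l\in S}\calN_l\sse\calN$, we have $\sum_{l\in S}(EC^{(1)}f)_l=\mathbf{1}_A^\top C^{(1)}f$. Expanding $\mathbf{1}_A^\top C^{(1)}$ column by column, the right-hand side equals the signed sum of $f_{e'}$ over the edges $e'$ of $\calG(1)$ having exactly one endpoint in $A$. A non-bridge edge of $\calG(1)$ has both endpoints inside a single region, hence never contributes to such a boundary; only bridges can. Thus, applying $E C^{(1)}f^*(1)=0$, the signed sum of the bridge flows $f^*_{e'}(1)$ crossing the boundary of any region-union $A$ vanishes.

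The second step chooses $A$ so that exactly one bridge crosses its boundary. Fix a bridge $e\in\calE\bs\calB(1)$. The reduced multigraph of $\calG(1)$ relative to $\calP^{\text{UC}}$ is obtained from the tree $\calG_{\calP^{\text{tree}}}$ by deleting the edges corresponding to the bridges in $\calB(1)$, hence is a forest $T'$ whose edge set is precisely the surviving bridges, and $e$ is one of its edges. Deleting $e$ from $T'$ disconnects the component of $T'$ containing $e$ into two pieces; let $S$ be the set of regions in one of them and $A=\bigcup_{l\in S}\calN_l$. Every surviving bridge other than $e$ has both of its region-endpoints on the same side of this cut, so the unique edge of $\calG(1)$ with exactly one endpoint in $A$ is $e$ itself. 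The identity from the first step then reads $\pm f^*_e(1)=0$, i.e.\ $f^*_e(1)=0$; since $e$ was arbitrary, the lemma follows.

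I expect the only subtle point to be the case in which $\calB(1)$ itself contains bridges: then $\calG(1)$ may be disconnected and $T'$ is a genuine forest rather than a tree, so the cut argument must be run inside the component of $T'$ that contains $e$. This causes no real difficulty, since removing an edge from a forest still separates its two endpoints, but it is where an over-hasty argument could slip. Everything else is routine manipulation of $C$ and the definition of $E$.
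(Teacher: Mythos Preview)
Your proof is correct and follows essentially the same route as the paper: both arguments sum the zero-area-control-error constraints \eqref{eqn:uc_ace} over the regions on one side of the cut determined by the bridge $e$, and use the tree structure to conclude that $e$ is the unique edge crossing that cut, forcing $f^*_e(1)=0$. The paper phrases the cut in terms of the two components of $\calG\setminus\{e\}$ rather than the reduced forest $T'$ you work with, and is less explicit about the post-failure incidence matrix and the case $\calB(1)$ containing bridges, but the resulting bipartition of regions---and hence the computation---is the same.
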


The above lemma shows that, in addition to the zero area control error constraints enforced by \eqref{eqn:uc_ace}, when the control areas that UC operates over form a tree-partition, UC further guarantees zero flow deviations on all tie-lines. This demonstrates how a tree-partition enables UC to achieve a stronger performance guarantee compared to its original form as proposed in \cite{zhao2016unified,zhao2018distributed}. The following proposition is another result of this type, which clarifies how the tree-partition brings localization properties to UC.

\begin{prop}\label{prop:localizability}
Assume $c_j(\cdot)$ is strictly convex for all $j\in\calN$. Given a non-critical initial failure $\calB(1)$, if a tree-partition region $\calN_l$ is not associated with $\calB(1)$, then $d^*_j(1)=0$ for all $j\in\calN_l$.
\end{prop}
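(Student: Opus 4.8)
The plan is to exploit Lemma \ref{lemma:zero_branch_deviation} to "decouple" the network along the bridges, turning the global optimization \eqref{eqn:uc_olc} into a collection of independent subproblems, one per tree-partition region, and then argue that for a region $\calN_l$ not associated with $\calB(1)$ the corresponding subproblem is solved by $d^*_j = 0$. Concretely, fix the optimal point $x^*(1) = (\omega^*(1), d^*(1), f^*(1), \theta^*(1))$. By Lemma \ref{lemma:zero_branch_deviation} we have $f^*_e(1) = 0$ for every bridge $e$. Since $\calP^{\text{tree}} = \calP^{\text{UC}}$, deleting all bridges splits $\calG(1)$ into the subgraphs $\calG_l$ induced on the regions $\calN_l$. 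Restricting the power balance constraint \eqref{eqn:uc_balance} to a region $\calN_l$: for each $j \in \calN_l$ the term $\sum_{e} C_{je} f_e$ only involves edges incident to $j$; the bridge edges incident to $j$ contribute $0$ because $f^*_e(1) = 0$ there, so the balance equation at $j$ only sees internal flows of $\calG_l$. Hence $(d^*_j(1))_{j\in\calN_l}$, together with the internal flows and phases, is feasible for the "local" version of \eqref{eqn:uc_olc} posed on $\calG_l$ with injections $r_j$ (the original ones restricted to $\calN_l$), and it must in fact be optimal for that local problem, since the objective \eqref{eqn:uc_obj} is separable across regions and any strictly-better local solution could be pasted back (keeping bridge flows at $0$) to beat $x^*(1)$ globally — contradicting optimality.

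The next step is to show that the local problem on a non-associated region $\calN_l$ has $d_j = 0$ as its optimal (indeed unique optimal) solution. The key observation is that $\calB(1)$ contains no edge with an endpoint in $\calN_l$, so removing $\calB(1)$ does not change the subgraph $\calG_l$ at all — it is identical before and after the failure. Before the failure, the pre-failure operating point had zero flow deviation on all bridges as well (this is the nominal/previous dispatch, about which the costs $c_j$ are centered, or more carefully: the pre-failure state is itself an optimal solution of \eqref{eqn:uc_olc} over $\calG$, and Lemma \ref{lemma:zero_branch_deviation} applies to it too since $\emptyset$ is trivially non-critical), and there $d_j$ was at the reference value, i.e. $d_j = 0$ by the normalization that $c_j$ attains its minimum at $0$. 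Since the local problem on $\calG_l$ is unchanged and the all-zero injection adjustment is feasible for it (it was feasible pre-failure and nothing about $\calG_l$ or $r|_{\calN_l}$ moved), $d \equiv 0$ on $\calN_l$, together with the pre-failure internal flows and phases, remains feasible, achieves the global minimum $\sum_{j\in\calN_l} c_j(0)$ of the local objective, and hence is optimal. Strict convexity of each $c_j$ then forces $d^*_j(1) = 0$ for all $j \in \calN_l$.

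The main obstacle I anticipate is the bookkeeping around the constraint \eqref{eqn:uc_ace} and the line-limit constraints \eqref{eqn:line_limit} in the "paste-back" argument: I must check that a candidate local solution on $\calG_l$, glued to the actual $x^*(1)$ on the other regions with all bridge flows set to $0$, is genuinely feasible for the global problem \eqref{eqn:uc_olc} — in particular that the area-control-error rows $ECf = 0$ are still satisfied (each row for a region $\calN_m$ involves only the bridge/tie-lines incident to $\calN_m$, all of which carry zero deviation, so each such row reads $0 = 0$ and is fine), that the internal line limits on $\calG_l$ are what the local problem enforces, and that feasibility of the local problem is non-vacuous (guaranteed because $x^*(1)$ restricted to $\calN_l$ already witnesses it). A secondary subtlety is justifying that the pre-failure state is itself an optimizer with $d = 0$; this should follow from the modeling convention that $c_j$ penalizes deviation from the last optimal dispatch and is minimized at $0$, so I would state that convention explicitly and invoke it. Once these feasibility checks are dispatched, separability of the objective plus strict convexity closes the argument cleanly.
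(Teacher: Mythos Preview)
Your plan is sound and, at the structural level, mirrors the paper's own argument: both show that replacing $x^*(1)$ on $\calN_l$ by the zero-deviation point yields a feasible global candidate with no worse objective, and then invoke strict convexity. The paper carries this out by recomputing $\theta$ from scratch and invoking a technical Laplacian lemma (Lemma~\ref{lemma:laplace_solvability}) to certify that the new phases match the modified flows; your decomposition/paste-back framing is a legitimate alternative that, if completed, would sidestep that lemma entirely.

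The one genuine gap is that your ``main obstacle'' paragraph flags \eqref{eqn:uc_ace} and \eqref{eqn:line_limit} but overlooks the only constraint that truly couples the regions: the DC flow relation \eqref{eqn:dcflow}, $f = BC^T\theta$. Pasting a local $(\hat d,\hat f,\hat\theta)$ on $\calN_l$ together with $x^*(1)$ elsewhere produces a flow vector, but you must still exhibit a \emph{single global} $\theta$ consistent with it, and this is precisely the step that absorbs all the work in the paper's proof. The repair is not hard once seen: since each bridge is a cut-edge, no simple cycle of $\calG(1)$ crosses a bridge, so every cycle is confined to one region and Kirchhoff's voltage law for the pasted flow decouples region by region (it holds in $\calN_l$ because $\hat\theta$ witnesses it there, and in every other region because $\theta^*(1)$ does). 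Equivalently, take $\hat\theta$ on $\calN_l$ and $\theta^*(1)$ elsewhere, then shift each connected component of $\calG(1)\setminus\calN_l$ by the constant needed to zero the phase difference across its unique bridge into $\calN_l$; the tree shape of $\calG_\calP$ makes these shifts mutually consistent. You should state this argument explicitly, since without it the paste-back is incomplete; with it, your route is arguably cleaner than the paper's.
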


The core idea underlying the proof of this proposition is easy to explain: Lemma \ref{lemma:zero_branch_deviation} implies the tie-line flows, which are the only coupling among the regions, are zero; thus the UC optimization \eqref{eqn:uc_olc} over different regions are totally ``separated'' and hence, the operating points for non-associated regions should remain unchanged. A rigorous proof is, however, more involved and requires a technical result that relates the solution space of $CBC^T$ to tree-partitions.

\begin{lemma}\label{lemma:laplace_solvability}
Let $\calP^{\text{\emph{tree}}}=\set{\calN_1,\calN_2,\cdots, \calN_l}$ be a tree-partition of $\calG$ and consider a vector $b\in\R^{\abs{\calN}}$ such that $b_j=0$ for all $j\in\calN_1$ and $\sum_{j\in\calN_k} b_j=0$ for $k\neq 1$. Set
$$
\partial\calN_1:=\set{j: j\notin\calN_1, \exists i\in\calN_1 \text{ s.t. } (i,j)\in \calE \text{ or } (j,i)\in\calE}
$$
and $\ol{\calN}_1=\calN_1\cup \partial\calN_1$. Then the linear system
\begin{equation}\label{eqn:laplace_eqn}
CBC^Tx = b
\end{equation}
is solvable, and any solution $x$ to \eqref{eqn:laplace_eqn} satisfies $x_i=x_j$ for all $i,j\in\ol{\calN}_1$.
\end{lemma}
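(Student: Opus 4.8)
The plan is to read $L:=CBC^{T}$ as the weighted graph Laplacian of $\calG$ with positive edge weights $B_{e}$, so that $x^{T}Lx=\sum_{e}B_{e}(C^{T}x)_{e}^{2}$ and hence $\ker L=\ker C^{T}$. Since the reduced multi-graph $\calG_{\calP^{\text{tree}}}$ is a tree it is connected, so $\calG$ is connected and $\ker L=\spann\{\mathbf 1\}$; therefore $Lx=b$ is solvable precisely when $\mathbf 1^{T}b=0$. This holds because $\sum_{j\in\calN}b_{j}=\sum_{j\in\calN_{1}}b_{j}+\sum_{k\ne 1}\sum_{j\in\calN_{k}}b_{j}=0$ by hypothesis, which settles solvability.

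For the second claim the key step is to show that every bridge carries zero flow, i.e.\ $(C^{T}x)_{e}=0$ for each bridge $e$. Root the tree $\calG_{\calP^{\text{tree}}}$ at the super-node $\calN_{1}$. Deleting the tree-edge corresponding to a bridge $e$ splits the tree into two pieces; let $\calU_{e}\subseteq\calN$ be the union of the regions in the piece \emph{not} containing $\calN_{1}$. Because $\calG_{\calP^{\text{tree}}}$ is a tree, and in particular has no parallel bridges, $e$ is the unique edge of $\calG$ with exactly one endpoint in $\calU_{e}$; hence $C^{T}\mathbf 1_{\calU_{e}}$ is supported on that single edge, and left-multiplying $Lx=b$ by $\mathbf 1_{\calU_{e}}^{T}$ collapses the left side to $\pm B_{e}(C^{T}x)_{e}$, giving $\pm B_{e}(C^{T}x)_{e}=\sum_{j\in\calU_{e}}b_{j}$. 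But $\calU_{e}$ is a union of regions, none equal to $\calN_{1}$, so the right-hand side is a sum of per-region balances, each zero; hence $(C^{T}x)_{e}=0$. In particular the two endpoints of every bridge incident to $\calN_{1}$ receive equal $x$-values, and $\partial\calN_{1}$ is exactly the set of far endpoints of these bridges.

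It remains to show $x$ is constant on $\calN_{1}$. Writing the $j$-th equation of $Lx=b$ for $j\in\calN_{1}$ and splitting the edges at $j$ into internal edges of the induced subgraph $\calG[\calN_{1}]$ and bridges leaving $\calN_{1}$, the bridge contributions vanish by the previous step, so the internal part alone vanishes: with $C_{1},B_{1}$ the incidence and weight matrices of $\calG[\calN_{1}]$ we obtain $C_{1}B_{1}C_{1}^{T}\,(x|_{\calN_{1}})=0$. Thus $x$ is constant on each connected component of $\calG[\calN_{1}]$; and each tree-partition region induces a connected subgraph (were $\calN_{1}$ disconnected, splitting it into its components would enlarge the vertex set of the reduced multi-graph without adding any inter-region edge, making it disconnected and contradicting connectedness of $\calG$). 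Hence $x$ equals a single constant $c$ on $\calN_{1}$, and by the bridge equalities $x_{j}=c$ for every $j\in\partial\calN_{1}$ as well, so $x\equiv c$ on $\ol{\calN}_{1}=\calN_{1}\cup\partial\calN_{1}$.

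The main obstacle is the zero-flow step: one must choose, for each bridge, the right aggregate of buses ($\calU_{e}$) so that all internal DC-flow terms telescope away under $\mathbf 1_{\calU_{e}}^{T}(\cdot)$, and one must invoke that $\calG_{\calP^{\text{tree}}}$ being a tree forces exactly one edge of $\calG$ across the induced cut. The connectedness of individual regions is a smaller point that again rests on the tree structure; once both are in hand, the rest is bookkeeping with the incidence matrix.
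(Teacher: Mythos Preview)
Your proof is correct and takes a genuinely different, more elementary route than the paper. The paper's argument fixes a reference bus $z$, applies Cramer's rule to the reduced Laplacian $\ol{L}$, and expresses each cofactor $\det(\ol{L}^{li})$ via the All-Minors Matrix-Tree Theorem as a weighted sum over spanning forests $\calT(\{l,i\},\{z\})$. It then proves two combinatorial lemmas about these forest families (splitting regions into those ``directly connected to $\calN_k$ without $\calN_1$'' and the rest) to conclude that the Cramer expression for $x_i$ is independent of $i\in\ol{\calN}_1$. This is considerably heavier machinery.

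Your approach instead exploits the tree structure through cuts: for each bridge $e$ you aggregate the Laplacian equation over the side $\calU_e$ not containing $\calN_1$, so that all internal flows telescope and only the single crossing term $B_e(C^Tx)_e$ survives, which is forced to zero by the per-region balance hypothesis. Once all bridge flows vanish, the rows of $Lx=b$ indexed by $\calN_1$ collapse to the homogeneous Laplacian equation on the induced subgraph $\calG[\calN_1]$, and connectedness of each region (your edge-count argument is fine: splitting a disconnected region adds vertices but no edges to the reduced multi-graph, contradicting that a tree on $l$ nodes has exactly $l-1$ edges while remaining connected) finishes the job. This avoids matrix-tree identities entirely and in fact yields a slightly stronger intermediate statement: \emph{every} bridge, not just those incident to $\calN_1$, carries zero angle difference under the hypotheses. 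The paper's route, on the other hand, gives an explicit formula for the $x_i$, which may be useful elsewhere but is unnecessary for the lemma as stated.
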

The set $\partial\calN_1$ defined above are the ``boundary'' buses of $\calN_1$ in $\calG$ and $\ol{\calN}_1$ can be interpreted as the closure of $\calN_1$. It has a simple interpretation in the DC power flow context. Think of $b$ as bus injections and $x$ as the phase angles. Suppose the injection at every node in $\calN_1$ is zero and the injections within every other region $\calN_k$ are balanced (i.e., sum to zero). Then Lemma \ref{lemma:laplace_solvability} says that the phase angles are the same at every node in $\ol{\calN}_1$, i.e., the angle difference across every line in or incident to $\calN_1$ is zero. This result only holds if the underlying regions form a tree-partition and its proof is presented in\iftoggle{isarxiv}{
Appendix \ref{section:proof_of_laplace_solvability}}{\cite{report}}. 

\begin{proof}[Proof sketch of Proposition \ref{prop:localizability}]
For the purpose of simplified notations, we drop the stage index $(1)$ from $x^*$ and denote $x^*=(\omega^*,d^*,f^*,\theta^*)$.\iftoggle{isarxiv}{
To streamline the presentation, we only sketch the main ideas of the proof here and leave the details to Appendix \ref{section:proof_of_localizability}.
}{Due to the space limitation, we only sketch the main ideas of the proof here and leave the details to our report \cite{report}.}

First, we construct a different point $\tilde{x}^*$ from $x^*$ as follows: (i) replace $d^*_j$ with $0$ for all $j\in\calN_l$; (ii) replace $f^*_e$ with $0$ for $e\in\calE$ that have both endpoints in $\calN_l$; and (iii) replace $\theta^*$ by a solution 
$\tilde{\theta}^*$ obtained from solving DC power flow equations with injections specified by $\tilde{d}$. Since $c_j(\cdot)$ attains its minimum at $0$, $\tilde{x}^*$ achieves at least the same objective value \eqref{eqn:uc_obj} as $x^*$. Thus $\tilde{x}^*$ must be an optimal point of \eqref{eqn:uc_olc}, provided it is feasible.

Second, as the core step in the whole proof, we apply Lemma \ref{lemma:laplace_solvability} to all regions of $\calP^{\text{tree}}$ separately, and show that $\tilde{\theta}^*$ is consistent with the injections and branch flows specified by $\tilde{x}^*$. This together with routine checks allows us to prove the feasibility of the point $\tilde{x}^*$.

Finally, when the cost functions $c_j(\cdot)$ are strictly convex, the optimal solution to  \eqref{eqn:uc_olc} is unique in $d^*$ and $f*$ ($\theta^*$ is also unique up to a constant shift). We thus conclude that $\tilde{x}^*=x^*$ (up to a constant shift on $\theta$). This completes the proof.
\end{proof}

This result reveals that, with the proposed control strategy, when the system converges to equilibrium after a non-critical failure, the injections and power flows in the non-associated regions remain unchanged. In other words, our control scheme guarantees that non-critical failures in a control area do not impact the operations of other areas at all, achieving a stronger control area independence than that ensured by the zero control error requirement.

Unlike the scheme in \cite{guo2018failure}, bridge failures in the proposed control strategy are treated in exactly the same way as other lines, provided that they are non-critical. Furthermore, the impact of such bridge failures is localized to the associated regions. This contrast with the global impact of bridge failures in \cite{guo2018failure} demonstrates again the benefits of connecting UC to tree-partitions.


\section{Controlling Critical Failures}\label{section:critical}
We now consider the case where the initial failure is critical. This may happen when a major generator or transmission line is disconnected from the grid.
\subsection{Unified Controller under Critical Failures}\label{section:critical_detection}
Since UC is a concept that emerged from the frequency regulation literature, the underlying optimization \eqref{eqn:uc_olc} is always assumed to be feasible in existing studies \cite{zhao2016unified,zhao2018distributed}. As such, little is known about the behaviors of UC if this assumption is violated, which  is the case when a critical failure happens. We now derive a result that closes this gap and characterizes the limiting behavior of UC in this setting.

In order to do so, we first need to formulate the exact controller dynamics of UC. Unfortunately, there is no standard way to do so as multiple designs of UC have been proposed in the literature \cite{zhao2014design,mallada2017optimal,zhao2016unified, zhao2018distributed}, each with its own strengths and weaknesses. Nevertheless, all of the proposed controller design are (approximately) projected primal-dual algorithms to solve the underlying optimization \eqref{eqn:uc_olc}, and satisfy the following assumptions:
\vspace{.1cm}
\newline
\textbf{UC1}: For all $j\in\calN$, $\ul{d}_j\le d_j(t)\le \ol{d}_j$ is satisifed for all $t$. This is achieved either via a projection operator that maps $d_j(t)$ to this interval, or by requiring the cost function $c_j(\cdot)$ to approach infinity near these boundaries.
\newline
\textbf{UC2}: Dual variables are introduced for constraints \eqref{eqn:uc_balance}-\eqref{eqn:line_limit} and maintained throughout the operation (denote these dual variables by $\lambda_i$ for $i\in\set{1,2,\cdots, \abs{\calN} + 3\abs{\calE}+\abs{\calP^{\text{UC}}}}$).
\newline
\textbf{UC3}: The primal variables $f, \theta$ and the dual variables $\lambda_i$ are updated by a primal-dual algorithm\footnote{We do not consider the specific variants of the standard primal-dual algorithms that are proposed in different designs of UC, since the standard primal-dual algorithm is often a good approximation.} to solve \eqref{eqn:uc_olc}.

\begin{prop} \label{prop:dual_to_infinity}
Assume UC1-UC3 hold. If \eqref{eqn:uc_olc} is infeasible, then there exists a dual variable $\lambda_i$ such that:
$$
\limsup_{t\goesto\infty}\abs{\lambda_i(t)} = \infty
$$
\end{prop}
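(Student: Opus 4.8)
The plan is to argue by contradiction: assume the dual trajectory stays bounded, and show that the primal-dual dynamics would then have to converge to something that solves \eqref{eqn:uc_olc}, contradicting infeasibility. First I would set up the Lagrangian picture implied by UC2--UC3. Since dual variables $\lambda_i$ are introduced for the equality and inequality constraints \eqref{eqn:uc_balance}--\eqref{eqn:line_limit} and the box constraints \eqref{eqn:control_limit} on $d$ are enforced by projection/barrier (UC1), the controller dynamics are a projected primal-dual flow for a convex program with a nonempty feasible set for the box constraints but possibly empty overall feasible region. The key structural fact is that infeasibility of \eqref{eqn:uc_olc}, by a Farkas/Gordan-type theorem of the alternative (or equivalently LP duality, since after eliminating $f$ via \eqref{eqn:dcflow} the constraint system in $(d,\theta)$ is affine together with the box on $d$), is equivalent to the existence of an \emph{unbounded improving direction for the dual}: there is a choice of multipliers making the dual objective $+\infty$, i.e. a recession direction $\mu$ of the dual feasible cone along which the (concave) dual function increases without bound.

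The main steps, in order, would be: (i) Write the Lagrange dual of \eqref{eqn:uc_olc} treating \eqref{eqn:uc_balance}, \eqref{eqn:dcflow}, \eqref{eqn:uc_ace}, \eqref{eqn:line_limit} as dualized constraints and \eqref{eqn:control_limit} as kept; because $c_j$ are convex and the kept set is a nonempty box, strong duality / the refined Farkas lemma applies, so primal infeasibility implies the dual is unbounded above. (ii) Translate ``dual unbounded above'' into the existence of a vector $v$ in the dual feasible directions with strictly positive inner product against the relevant right-hand-side data, which is exactly a certificate of infeasibility. (iii) Invoke the standard Lyapunov analysis of the (projected) primal-dual algorithm: the function $V(x,\lambda) = \tfrac12\|x - x^\star\|^2 + \tfrac12\|\lambda-\lambda^\star\|^2$ used in UC's stability proofs requires the existence of a saddle point $(x^\star,\lambda^\star)$; show that if $\sup_t\|\lambda(t)\| < \infty$ then, combined with UC1 (the primal $d$ is confined to a compact box and $f,\theta$ are pinned down by $d$ through \eqref{eqn:dcflow} up to the usual constant shift), the whole trajectory $(x(t),\lambda(t))$ lives in a compact set, so it has limit points. (iv) Use the fact that along the primal-dual flow the constraint-violation residuals are driven by $\dot\lambda$; a bounded $\lambda$ with the monotone/accumulating structure of the dual update forces the time-average (or a subsequence) of the residual vector to vanish, i.e. a limit point $\bar x$ would satisfy \eqref{eqn:uc_balance}--\eqref{eqn:line_limit} and \eqref{eqn:control_limit} — a feasible point, contradicting infeasibility. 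Hence some $\lambda_i$ must be unbounded, and since the only way the dynamics can fail to stay bounded is for a dual coordinate to escape to $\pm\infty$, we get $\limsup_t |\lambda_i(t)| = \infty$.

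The hard part will be step (iv): making rigorous the claim that boundedness of $\lambda$ plus the primal-dual update forces the constraint residuals to go to zero along a subsequence, \emph{without} committing to one specific UC variant. The clean way is to use the ``integral of the residual equals the change in the dual'' identity inherent in $\dot\lambda_i = (\text{residual}_i)$ (or its projected analogue): if $\lambda_i$ is bounded for all $i$ then $\int_0^\infty (\text{residual}_i)\,dt$ is bounded, and a standard Barbalat-type argument (residuals have bounded derivative because the primal state is confined to a compact set by UC1 and \eqref{eqn:dcflow}) yields $\text{residual}_i(t)\to 0$. The inequality constraints \eqref{eqn:line_limit} need the usual complementary-slackness bookkeeping — the relevant ``residual'' is the projected one $\max\{0, f_e - \ol f_e\}$ etc. — but the same Barbalat argument applies. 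Once residuals vanish, any limit point is feasible, closing the contradiction.

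A couple of remarks on scope. I would state the argument for the representative projected primal-dual flow (as the footnote to UC3 permits) and note that the conclusion is robust to the minor modifications in \cite{zhao2014design,mallada2017optimal,zhao2016unified,zhao2018distributed} since all of them retain the defining property that the dual coordinate of a constraint integrates that constraint's residual. I would also point out that the proposition is essentially a converse to the known convergence result quoted just before the statement (``when \eqref{eqn:uc_olc} is feasible, UC converges to the optimum''): here the contrapositive of that convergence mechanism is exactly what drives a dual variable to infinity.
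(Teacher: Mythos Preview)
Your step (ii) is the right structural insight, and it is in fact all that is needed: the paper's proof uses the Farkas/separating-hyperplane certificate \emph{directly} rather than by contradiction. From infeasibility one extracts a fixed vector $z=[w_1;w_2]$ with $w_1\ge 0$ such that, using only UC1 (so that $x(t)$ stays in the box $S_2$ at all times) together with the elementary inequality $[y]_{\lambda_1}^+\ge y$, one gets $z^T\dot\lambda(t)$ uniformly bounded below by a positive constant for every $t$. Integrating gives $z^T\lambda(t)\to\infty$, hence some coordinate $|\lambda_i(t)|$ is unbounded. No compactness of the primal trajectory, no limit points, no Barbalat.

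Your steps (iii)--(iv), by contrast, do not close. First, the claim that ``$f,\theta$ are pinned down by $d$ through \eqref{eqn:dcflow}'' holds only at equilibrium; during the dynamics \eqref{eqn:dcflow} is one of the dualized constraints and need not be satisfied, so UC1 alone does not confine $f,\theta$ (and hence $x$) to a compact set, and you cannot extract convergent subsequences. Second, even granting a bounded primal, the Barbalat step fails for the equality duals: $\lambda_{2,i}$ bounded only says $\int_0^t(Cx-h)_i\,ds$ is \emph{bounded}, not \emph{convergent} --- take $g(t)=\cos t$, whose integral $\sin t$ is bounded while $g\not\to 0$. The time-average variant $\tfrac{1}{T}\int_0^T(Cx-h)\,dt\to 0$ does not produce a single time at which all residuals are simultaneously small, so you cannot manufacture a feasible limit point and the contradiction does not go through. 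The remedy is simply to abandon (iii)--(iv) and use the certificate from (ii) in the direct way sketched above, which is exactly what the paper does.
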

This result implies that after a critical failure, UC cannot drive the system to a proper and safe operating point. In fact, it always leads to instability in the system (certain dual variables can take arbitrarily large values). This drawback, however, when viewed from a different perspective, suggests a way to detect critical failures. More specifically, since Proposition \ref{prop:dual_to_infinity} guarantees certain dual variables will become arbitrarily large in UC operation when \eqref{eqn:uc_olc} is infeasible, we can always set a threshold for the dual variables and raise an infeasibility warning if some of them exceed the corresponding thresholds. By doing so, critical failures can always be detected, and this happens in a distributed fashion in parallel to the normal operation of UC. Moreover, by setting tighter thresholds around the normal operating point, such failures can be detected more promptly.

Of course, this method is subject to false alarms since non-critical failures may also cause relatively large dual variable values in transient state. There is an intrinsic tradeoff on the level of the thresholds to be applied, in the following sense: A tighter threshold allows critical failures to be detected more promptly, yet also leads to a larger false alarm rate. In practice, these thresholds should be chosen carefully by the operator in accordance to the specific system parameters and application scenarios.

\subsection{Constraint Lifting as a Remedy}\label{section:critical_contraint_lift}
In the event of a critical failure, it is physically impossible for UC to simultaneously achieve all of its control objectives. Our discussion in the last subsection shows that, if UC still operates following its normal dynamics, the system is subject to instability and thus  successive failures. In the worst case, this can lead to large scale outages.

We can prevent this from happening by lifting certain constraints from UC. Without compromising the basic objective to stabilize the system, there are two ways to do so:
\begin{itemize}
	\item The zero area control error constraints \eqref{eqn:uc_ace} between certain control areas can be lifted. This in practice means the controller now gets more control areas involved to mitigate the failure.
	\item Certain load shedding can be applied, which in \eqref{eqn:uc_olc} is reflected by enlarging the range $[\ul{d}_j, \ol{d}_j]$ for the corresponding load buses.
\end{itemize}


By iteratively lifting the two types of constraints above, one can guarantee the feasibility of \eqref{eqn:uc_olc} and ensure the system under the proposed control converge to a stable point, which in particular is free from successive failures. This, however, comes with the cost of potential load loss, and thus must be carried out properly. In practice, the iterative relaxation procedure can follow predetermined rules specified by the system operator to prioritize different objectives.

\section{Case Studies}\label{section:case_studies}
\begin{figure}[t]
\centering
\iftoggle{isarxiv}{
\includegraphics[width=0.45\textwidth]{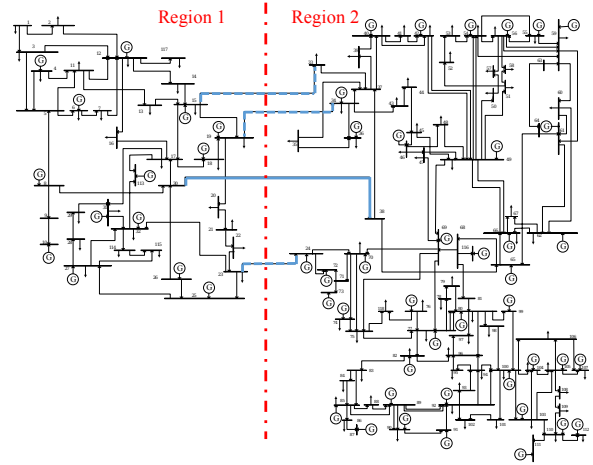}
}{
\includegraphics[width=0.45\textwidth]{figs/IEEE118_2.pdf}}
\caption{One line diagram of the IEEE 118-bus test system with two control areas. Dashed blue lines are switched off when a tree-partition needs to be formed.}
\label{fig:IEEE_118}
\end{figure}
In this section, we evaluate the performance of the proposed control strategy on the IEEE 118-bus test system, which comprises of two control areas as shown in Fig.~\ref{fig:IEEE_118}. The three dashed lines (15, 33), (19, 34) and (23, 24) are switched off whenever a tree-partition needs to be formed, and the new topology is referred to as the revised network. 
The failure scenarios to be examined are created as follows. First, we generate 100 load injections by adding random perturbations (up to 25\% of the base value) to the nominal load profile from \cite{zimmerman2011matpower} and then solve the DC OPF to obtain the corresponding generator operating points. Second, we iterate over every transmission line in the IEEE 118-bus test system as initial failures and simulate the cascading process thus triggered. This produces about 18,000 scenarios.



\subsection{$N-1$ Security}\label{section:case_study_n_1}
We first evaluate the system robustness to failures in terms of the $N-1$ security standard. In particular, we implement both the proposed control strategy and the classical AGC \cite{bergen2009power} on the IEEE 118-bus testbed, and look at the average number of vulnerable lines across all the scenarios that lead to either successive failures or load shedding when they are tripped. In order to illustrate the improvements of the proposed control strategy in different levels of system congestion, we scale down the transmission line capacities to $\alpha=0.9, 0.8, 0.7$ of the base values and collect statistics on the number of vulnerable lines in all these settings.

Our results are summarized in Fig.~\ref{fig:vulnerable_lines}. It can be seen that the proposed control incurs far less number of vulnerable lines in all cases compared to AGC, and this difference is particularly clear when the system is congested. We highlight that this happens with the proposed control operates over the revised network, where some of the tie-lines are switched off and hence certain capacity is removed from the system. Moreover, the remaining tie-line (30, 38) in the revised network is never vulnerable under the proposed control.

%
%

\begin{figure}[t]
\centering
\iftoggle{isarxiv}{
\begin{tikzpicture}
\begin{axis}[
width = 8 cm,
height = 3.5cm,
ymax = 30,
ymin = 0,    
major x tick style = transparent,
enlarge x limits=0.25,
ybar=0pt,
bar width=0.2,
xlabel style = {font = \small},
ylabel style = {font = \small},
yticklabel style = {font = \small},
ylabel = {\# of Vulnerable Lines},
xtick={1.5, 2.5, ..., 4.5},
x tick label as interval,
xticklabels={$\alpha=0.9$,$\alpha=0.8$,$\alpha=0.7$ },
xticklabel style={font = \small, align=center},
legend style={
	at      = {(0.05, 0.95)},
	anchor  = north west,
	font    = \small,
	legend cell align   = left,
	legend columns    = 2,
	transpose legend,
},
]

\addplot+[error bars/.cd,
y dir=both,y explicit]
coordinates {
    (2,4.6) +- (0, 0.77)
    (3, 6.8) +- (0.0, 1.35)
    (4, 16.1) +- (0.0, 8.58)
};
\addplot+[error bars/.cd,
y dir=both,y explicit]
coordinates {
    (2,2.23) +- (0.0, 0.43)
    (3,2.23) +- (0.0, 0.43)
    (4,2.37) +- (0.0, 0.68)
};

\legend{AGC, Proposed Control}
\end{axis}
\end{tikzpicture}
}{
\begin{tikzpicture}
\begin{axis}[
width = 8 cm,
height = 3.5cm,
ymax = 30,
ymin = 0,    
major x tick style = transparent,
enlarge x limits=0.25,
ybar=0pt,
bar width=0.2,
xlabel style = {font = \small},
ylabel style = {font = \small},
yticklabel style = {font = \small},
ylabel = {\# of Vulnerable Lines},
xtick={1.5, 2.5, ..., 4.5},
x tick label as interval,
xticklabels={$\alpha=0.9$,$\alpha=0.8$,$\alpha=0.7$ },
xticklabel style={font = \small, align=center},
legend style={
	at      = {(0.05, 0.95)},
	anchor  = north west,
	font    = \small,
	legend cell align   = left,
	legend columns    = 2,
	transpose legend,
},
]

\addplot+[error bars/.cd,
y dir=both,y explicit]
coordinates {
    (2,4.6) +- (0, 0.77)
    (3, 6.8) +- (0.0, 1.35)
    (4, 16.1) +- (0.0, 8.58)
};
\addplot+[error bars/.cd,
y dir=both,y explicit]
coordinates {
    (2,2.23) +- (0.0, 0.43)
    (3,2.23) +- (0.0, 0.43)
    (4,2.37) +- (0.0, 0.68)
};

\legend{AGC, Proposed Control}
\end{axis}
\end{tikzpicture}
}
\caption{Number of vulnerable lines with respect to different levels of system congestion. }
\label{fig:vulnerable_lines}
\end{figure}
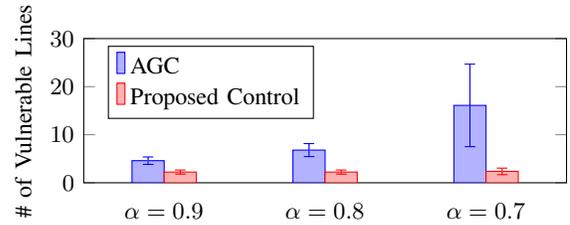


\subsection{Loss of Load and Disruption to System Operation}
We now look at the load loss rate, defined as the ratio between the total loss of load with respect to the original total demand, of the system to evaluate how well failures are mitigated in different settings. In this experiment, we scale down the generator capacities by 35\% and the line capacities by 30\% so that the system is more susceptible to failures.  In order to demonstrate how UC and tree-partition impact the system performance seperately, we look at four different settings: (i) AGC on the original network; (ii) AGC on the revised network; (iii) UC on the original network; and (iv) UC on the revised network. Fig.~\ref{fig:load_loss} plots the complementary cumulative distribution (CCDF) of the load loss rates across all of the failure scenarios in these settings.

As one can see from the figure, for both the original and revised networks, UC significantly outperforms AGC. In particular, the largest load loss rate for UC is less than 2\% for both networks, while AGC can lead to loss rate up to  14\% on the revised network and 21\% on the original network. This demonstrates the  benefits of using our control strategy to mitigate failures.


Although the performance of UC in terms of loss rate are roughly the same with or without tree-partition, there is a drastic difference when we look at how well the failure impacts are localized. In Fig.~\ref{fig:gen_response}, we plot the CCDF on the number of generators whose operating points are adjusted in response to the initial failures. It shows that the operation of much fewer generators is disrupted when the control areas that UC operates over form a tree-partition. This confirms our intuition and theoretical results about how a tree-partition structure helps localize failures.





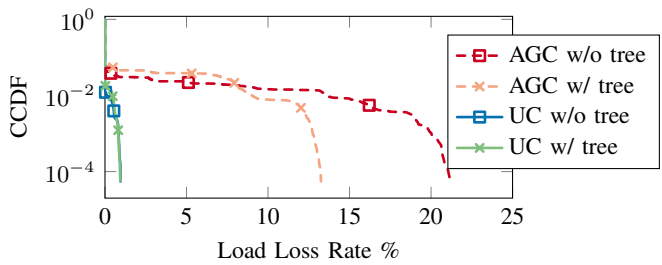
\begin{figure}
\vspace{.1cm}
\centering
\iftoggle{isarxiv}{
\begin{tikzpicture}
\begin{semilogyaxis}[
        ymax    = 1.2,
        ymin    = 0,
        xmax    = 0.25,
        xmin    = 0,
        width   = 7 cm,
        height  = 4 cm,
        xlabel style = {font = \small},
        ylabel style = {font = \small},
        xticklabel style = {font = \small},
        yticklabel style = {font = \small},
    	xlabel = {Load Loss Rate \%},
    	ylabel = {CCDF},
    	xtick = {0,0.05,...,0.3},
    	xticklabels = {0,5,...,30},
    	legend style={
            at      = {(0.84, 0.9)},
            font    = \small,
            anchor  = north west,
            legend cell align   = left,
            legend columns      = 1,
        },
	]
	\pgfplotstableread{PrimaryOriginal.txt}\datatable
	\addplot+[
        mark = none,
        color       = myred,
        line width  = 1pt,
        line cap    = round,
        line join   = round,
        dashed,
        mark = square, 
        mark options={scale=1,solid},
        mark repeat=300,
        mark phase=90,
    ] table {\datatable};
    
	\pgfplotstableread{PrimaryRevised.txt}\datatable
	\addplot +[
        mark = none,
        color       = myorange,
        line width  = 1pt,
        line cap    = round,
        line join   = round,
        dashed,
        mark = x, 
        mark options={scale=1,solid},
        mark repeat=300,
        mark phase=90,
    ] table {\datatable};
    
        	\pgfplotstableread{SecondaryOriginal.txt}\datatable
	\addplot +[
        mark = none,
        color       = myblue,
        line width  = 1pt,
        line cap    = round,
        line join   = round,
        mark = square, 
        mark options={scale=1,solid},
        mark repeat=150,
        mark phase=10,
    ] table {\datatable};
    
	\pgfplotstableread{SecondaryRevised.txt}\datatable
	\addplot +[
        mark = none,
        color       = mygreen,
        line width  = 1pt,
        line cap    = round,
        line join   = round,
        mark = x, 
        mark options={scale=1,solid},
        mark repeat=150,
        mark phase=10,
    ] table {\datatable};
    \legend{AGC w/o tree, AGC w/ tree, UC w/o tree,   UC w/ tree};
\end{semilogyaxis}
\end{tikzpicture}
}{
\begin{tikzpicture}
\begin{semilogyaxis}[
        ymax    = 1.2,
        ymin    = 0,
        xmax    = 0.25,
        xmin    = 0,
        width   = 7 cm,
        height  = 4 cm,
        xlabel style = {font = \small},
        ylabel style = {font = \small},
        xticklabel style = {font = \small},
        yticklabel style = {font = \small},
    	xlabel = {Load Loss Rate \%},
    	ylabel = {CCDF},
    	xtick = {0,0.05,...,0.3},
    	xticklabels = {0,5,...,30},
    	legend style={
            at      = {(0.84, 0.9)},
            font    = \small,
            anchor  = north west,
            legend cell align   = left,
            legend columns      = 1,
        },
	]
	\pgfplotstableread{PrimaryOriginal.txt}\datatable
	\addplot+[
        mark = none,
        color       = myred,
        line width  = 1pt,
        line cap    = round,
        line join   = round,
        dashed,
        mark = square, 
        mark options={scale=1,solid},
        mark repeat=300,
        mark phase=90,
    ] table {\datatable};
    
	\pgfplotstableread{PrimaryRevised.txt}\datatable
	\addplot +[
        mark = none,
        color       = myorange,
        line width  = 1pt,
        line cap    = round,
        line join   = round,
        dashed,
        mark = x, 
        mark options={scale=1,solid},
        mark repeat=300,
        mark phase=90,
    ] table {\datatable};
    
        	\pgfplotstableread{SecondaryOriginal.txt}\datatable
	\addplot +[
        mark = none,
        color       = myblue,
        line width  = 1pt,
        line cap    = round,
        line join   = round,
        mark = square, 
        mark options={scale=1,solid},
        mark repeat=150,
        mark phase=10,
    ] table {\datatable};
    
	\pgfplotstableread{SecondaryRevised.txt}\datatable
	\addplot +[
        mark = none,
        color       = mygreen,
        line width  = 1pt,
        line cap    = round,
        line join   = round,
        mark = x, 
        mark options={scale=1,solid},
        mark repeat=150,
        mark phase=10,
    ] table {\datatable};
    \legend{AGC w/o tree, AGC w/ tree, UC w/o tree,   UC w/ tree};
\end{semilogyaxis}
\end{tikzpicture}
}
\caption{CCDF for load loss rate.}
\vspace{-.1cm}
\label{fig:load_loss}
\end{figure}

\begin{figure}
\centering
\iftoggle{isarxiv}{
\begin{tikzpicture}
\begin{axis}[
        ymax    = 0.4,
        ymin    = 0,
        xmax    = 60,
        xmin    = 0,
        width   = 8 cm,
        height  = 3.5 cm,
        xlabel style = {font = \small},
        ylabel style = {font = \small},
        xticklabel style = {font = \small},
        yticklabel style = {font = \small},
    	xlabel = {\# of Adjusted Generators},
    	ylabel = {CCDF},
    	xtick = {0,15,..., 60},
    	legend style={
            at      = {(0.98, 0.5)},
            font    = \small,
            fill    = none,
            anchor  = south east,
            legend cell align   = left,
            legend columns      = 1,
        },
	]
    
	\pgfplotstableread{GenResponseOriginal.txt}\datatable
	\addplot +[
	mark = none,
        color       = myblue,
        line width  = 1pt,
        line cap    = round,
        line join   = round,
        mark = square, 
        mark options={scale=1,solid},
        mark repeat=15,
        mark phase=3,
    ] table {\datatable};
    
	\pgfplotstableread{GenResponseRevised.txt}\datatable
	\addplot +[
	mark = none,
        color       = mygreen,
        line width  = 1pt,
        line cap    = round,
        line join   = round,
        mark = x, 
        mark options={scale=1,solid},
        mark repeat=15,
        mark phase=3,
    ] table {\datatable};
    \legend{UC w/o tree, UC w/ tree};
    \addplot + [mark=none, color = myviolet, dashed, line width = 1pt] coordinates {(15,0) (15,0.4)};
\end{axis}
\end{tikzpicture}
}{
\begin{tikzpicture}
\begin{axis}[
        ymax    = 0.4,
        ymin    = 0,
        xmax    = 60,
        xmin    = 0,
        width   = 8 cm,
        height  = 3.5 cm,
        xlabel style = {font = \small},
        ylabel style = {font = \small},
        xticklabel style = {font = \small},
        yticklabel style = {font = \small},
    	xlabel = {\# of Adjusted Generators},
    	ylabel = {CCDF},
    	xtick = {0,15,..., 60},
    	legend style={
            at      = {(0.98, 0.5)},
            font    = \small,
            fill    = none,
            anchor  = south east,
            legend cell align   = left,
            legend columns      = 1,
        },
	]
    
	\pgfplotstableread{GenResponseOriginal.txt}\datatable
	\addplot +[
	mark = none,
        color       = myblue,
        line width  = 1pt,
        line cap    = round,
        line join   = round,
        mark = square, 
        mark options={scale=1,solid},
        mark repeat=15,
        mark phase=3,
    ] table {\datatable};
    
	\pgfplotstableread{GenResponseRevised.txt}\datatable
	\addplot +[
	mark = none,
        color       = mygreen,
        line width  = 1pt,
        line cap    = round,
        line join   = round,
        mark = x, 
        mark options={scale=1,solid},
        mark repeat=15,
        mark phase=3,
    ] table {\datatable};
    \legend{UC w/o tree, UC w/ tree};
    \addplot + [mark=none, color = myviolet, dashed, line width = 1pt] coordinates {(15,0) (15,0.4)};
\end{axis}
\end{tikzpicture}
}
\caption{CCDF for generator response.}
\label{fig:gen_response}
\end{figure}
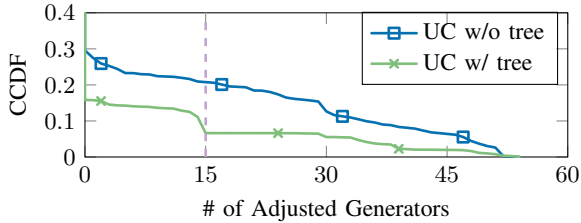

\section{Conclusion}\label{section:conclusion}
In this work, we propose a control strategy that combines the concepts of the unified controller and the network tree-partition to mitigate and localize cascading failures in power system. Our case studies on the IEEE 118-bus test system show that the proposed control scheme greatly improves system robustness to cascading failures as compared to classical AGC. 

This work can be extended in several directions. First, our model builds upon linearized swing and power flow dynamics, which are accurate for small deviations but less so under large disruptions. It is of interest to understand how the non-linearity in more sophisticated models on large deviations impacts our results. Second, the proposed control strategy requires certain tie-lines to be switched off so that a tree-partition is formed. It would be useful if the selection of such lines can be systematically optimized under a certain objective function. Third, both the power flow redistribution and the line capacities are relevant in the cascading failure dynamics. It is important to investigate how adjustments on line capacities can be incorporated to our framework to further improve the system reliability.

\bibliographystyle{IEEEtran}
\bibliography{biblio}

\begin{thebibliography}{10}
\providecommand{\url}[1]{#1}
\csname url@samestyle\endcsname
\providecommand{\newblock}{\relax}
\providecommand{\bibinfo}[2]{#2}
\providecommand{\BIBentrySTDinterwordspacing}{\spaceskip=0pt\relax}
\providecommand{\BIBentryALTinterwordstretchfactor}{4}
\providecommand{\BIBentryALTinterwordspacing}{\spaceskip=\fontdimen2\font plus
\BIBentryALTinterwordstretchfactor\fontdimen3\font minus
  \fontdimen4\font\relax}
\providecommand{\BIBforeignlanguage}[2]{{%
\expandafter\ifx\csname l@#1\endcsname\relax
\typeout{** WARNING: IEEEtran.bst: No hyphenation pattern has been}%
\typeout{** loaded for the language `#1'. Using the pattern for}%
\typeout{** the default language instead.}%
\else
\language=\csname l@#1\endcsname
\fi
#2}}
\providecommand{\BIBdecl}{\relax}
\BIBdecl

\bibitem{bienstock2007integer}
D.~Bienstock and S.~Mattia, ``Using mixed-integer programming to solve power
  grid blackout problems,'' \emph{Discrete Optimization}, vol.~4, no.~1, pp.
  115 -- 141, 2007.

\bibitem{hines2007controlling}
P.~Hines, S.~Talukdar \emph{et~al.}, ``Controlling cascading failures with
  cooperative autonomous agents,'' \emph{International journal of critical
  infrastructures}, vol.~3, no.~1, p. 192, 2007.

\bibitem{baldick2008initial}
R.~Baldick, B.~Chowdhury, I.~Dobson, Z.~Dong, B.~Gou, D.~Hawkins, H.~Huang,
  M.~Joung, D.~Kirschen, F.~Li \emph{et~al.}, ``Initial review of methods for
  cascading failure analysis in electric power transmission systems ieee pes
  cams task force on understanding, prediction, mitigation and restoration of
  cascading failures,'' in \emph{2008 IEEE Power and Energy Society General
  Meeting-Conversion and Delivery of Electrical Energy in the 21st
  Century}.\hskip 1em plus 0.5em minus 0.4em\relax IEEE, 2008, pp. 1--8.

\bibitem{carreras2002critical}
B.~A. Carreras, V.~E. Lynch, I.~Dobson, and D.~E. Newman, ``Critical points and
  transitions in an electric power transmission model for cascading failure
  blackouts,'' \emph{Chaos: An interdisciplinary journal of nonlinear science},
  vol.~12, no.~4, pp. 985--994, 2002.

\bibitem{anghel2007stochastic}
M.~Anghel, K.~A. Werley, and A.~E. Motter, ``Stochastic model for power grid
  dynamics,'' in \emph{HICSS}.\hskip 1em plus 0.5em minus 0.4em\relax IEEE,
  2007, pp. 113--113.

\bibitem{yan2015cascading}
J.~Yan, Y.~Tang, H.~He, and Y.~Sun, ``Cascading failure analysis with {DC}
  power flow model and transient stability analysis,'' \emph{IEEE TPS},
  vol.~30, no.~1, pp. 285--297, 2015.

\bibitem{bernstein2014power}
A.~Bernstein, D.~Bienstock, D.~Hay, M.~Uzunoglu, and G.~Zussman, ``Power grid
  vulnerability to geographically correlated failures: Analysis and control
  implications,'' in \emph{IEEE INFOCOM}, 2014, pp. 2634--2642.

\bibitem{nedic2006criticality}
D.~P. Nedic, I.~Dobson, D.~S. Kirschen, B.~A. Carreras, and V.~E. Lynch,
  ``Criticality in a cascading failure blackout model,'' \emph{International
  Journal of Electrical Power \& Energy Systems}, vol.~28, no.~9, pp. 627--633,
  2006.

\bibitem{rios2002value}
M.~A. Rios, D.~S. Kirschen, D.~Jayaweera, D.~P. Nedic, and R.~N. Allan, ``Value
  of security: modeling time-dependent phenomena and weather conditions,''
  \emph{IEEE TPS}, vol.~17, no.~3, pp. 543--548, 2002.

\bibitem{song2016dynamic}
J.~Song, E.~Cotilla-Sanchez, G.~Ghanavati, and P.~D. Hines, ``Dynamic modeling
  of cascading failure in power systems,'' \emph{IEEE TPS}, vol.~31, no.~3, pp.
  2085--2095, 2016.

\bibitem{brummitt2003cascade}
C.~D. Brummitt, R.~M. D’Souza, and E.~A. Leicht, ``Suppressing cascades of
  load in interdependent networks,'' \emph{Proceedings of the National Academy
  of Sciences}, vol. 109, no.~12, pp. E680--E689, 2012.

\bibitem{kong2010failure}
Z.~Kong and E.~M. Yeh, ``Resilience to degree-dependent and cascading node
  failures in random geometric networks,'' \emph{IEEE TIT}, vol.~56, no.~11,
  pp. 5533--5546, Nov 2010.

\bibitem{crucitti2004topological}
P.~Crucitti, V.~Latora, and M.~Marchiori, ``A topological analysis of the
  italian electric power grid,'' \emph{Physica A: Statistical mechanics and its
  applications}, vol. 338, no. 1-2, pp. 92--97, 2004.

\bibitem{dobson2005probilistic}
I.~Dobson, B.~A. Carreras, and D.~E. Newman, ``A loading-dependent model of
  probabilistic cascading failure,'' \emph{Probab. Eng. Inf. Sci.}, vol.~19,
  no.~1, pp. 15--32, Jan. 2005.

\bibitem{wang2012markov}
Z.~Wang, A.~Scaglione, and R.~J. Thomas, ``A {M}arkov-transition model for
  cascading failures in power grids,'' in \emph{HICSS}.\hskip 1em plus 0.5em
  minus 0.4em\relax IEEE, 2012, pp. 2115--2124.

\bibitem{rahnamay2014stochastic}
M.~Rahnamay-Naeini, Z.~Wang, N.~Ghani, A.~Mammoli, and M.~M. Hayat,
  ``Stochastic analysis of cascading-failure dynamics in power grids,''
  \emph{IEEE TPS}, vol.~29, no.~4, pp. 1767--1779, 2014.

\bibitem{hines2017cascading}
P.~D. Hines, I.~Dobson, and P.~Rezaei, ``Cascading power outages propagate
  locally in an influence graph that is not the actual grid topology,''
  \emph{IEEE TPS}, vol.~32, no.~2, pp. 958--967, 2017.

\bibitem{guo2017monotonicity}
L.~{Guo}, C.~{Liang}, and S.~H. {Low}, ``Monotonicity properties and spectral
  characterization of power redistribution in cascading failures,'' in
  \emph{Allerton Conference}, Oct 2017, pp. 918--925.

\bibitem{guo2018failure}
L.~{Guo}, C.~{Liang}, A.~{Zocca}, S.~H. {Low}, and A.~{Wierman}, ``Failure
  localization in power systems via tree partitions,'' in \emph{CDC}, Dec 2018,
  pp. 6832--6839.

\bibitem{zhao2014design}
C.~Zhao, U.~Topcu, N.~Li, and S.~Low, ``Design and stability of load-side
  primary frequency control in power systems,'' \emph{Automatic Control, IEEE
  Transactions on}, vol.~59, no.~5, pp. 1177--1189, 2014.

\bibitem{mallada2017optimal}
\BIBentryALTinterwordspacing
E.~Mallada, C.~Zhao, and S.~H. Low, ``{O}ptimal load-side control for frequency
  regulation in smart grids,'' \emph{IEEE Transactions on Automatic Control},
  vol.~62, no.~12, pp. 6294--6309, 12 2017. [Online]. Available:
  \url{https://mallada.ece.jhu.edu/pubs/2017-TAC-MZL.pdf}
\BIBentrySTDinterwordspacing

\bibitem{zhao2016unified}
C.~Zhao, E.~Mallada, S.~Low, and J.~Bialek, ``A unified framework for frequency
  control and congestion management,'' in \emph{Power Systems Computation
  Conference (PSCC), 2016}.\hskip 1em plus 0.5em minus 0.4em\relax IEEE, 2016,
  pp. 1--7.

\bibitem{zhao2018distributed}
C.~Zhao, E.~Mallada, S.~H. Low, and J.~Bialek, ``Distributed plug-and-play
  optimal generator and load control for power system frequency regulation,''
  \emph{International Journal of Electrical Power \& Energy Systems}, vol. 101,
  pp. 1--12, 2018.

\bibitem{li2016connecting}
N.~Li, C.~Zhao, and L.~Chen, ``Connecting automatic generation control and
  economic dispatch from an optimization view,'' \emph{IEEE Transactions on
  Control of Network Systems}, vol.~3, no.~3, pp. 254--264, 2016.

\bibitem{bergen2009power}
A.~R. Bergen, \emph{Power systems analysis}.\hskip 1em plus 0.5em minus
  0.4em\relax Pearson Education India, 2009.

\bibitem{soltan2015analysis}
S.~Soltan, D.~Mazauric, and G.~Zussman, ``Analysis of failures in power
  grids,'' \emph{IEEE TCNS}, no.~99, 2015.

\bibitem{bondy1976graph}
J.~A. Bondy, U.~S.~R. Murty \emph{et~al.}, \emph{Graph theory with
  applications}.\hskip 1em plus 0.5em minus 0.4em\relax Citeseer, 1976, vol.
  290.

\bibitem{zimmerman2011matpower}
R.~D. Zimmerman, C.~E. Murillo-S{\'a}nchez, and R.~J. Thomas, ``Matpower:
  Steady-state operations, planning, and analysis tools for power systems
  research and education,'' \emph{IEEE TPS}, vol.~26, no.~1, pp. 12--19, 2011.

\end{thebibliography}

\iftoggle{isarxiv}{
\appendices
\section{Proof of Lemma \ref{lemma:zero_branch_deviation}}\label{section:proof_of_zero_branch_deviation}
Given a bridge $e=(j_1,j_2)$ of $\calG$, removing $e$ from $\calG$ partitions $\calG$ into two connected components, say $\calC_1$ and $\calC_2$. Without loss of generality, assume $j_1\in\calC_1$ and $j_2\in\calC_2$. For a region $\calN_v$ from $\calP$, we say $\calN_v$ is within $\calC_1$ if for any $j\in\calN_v$ we have $j\in\calC_1$. It is easy to check from the definition of tree-partitions that any region $\calN_v$ from $\calP$ is either within $\calC_1$ or within $\calC_2$, and $e$ is the only edge in $\calG$ that has one endpoint in $\calC_1$ and the other endpoint in $\calC_2$.

Let $\calP'$ be the set of regions within $\calC_1$ from $\calP$, and put $\bff{1}_{\calP'}\in\set{0,1}^{\abs{\calP}}$ to be its characteristic vector (that is, the $l$-th component of $\bff{1}_{\calP'}$ is $1$ if $\calN_l\in\calP'$ and $0$ othersize). Given two buses $i$ and $j$, we denote $i\goesto j$ if $(i,j)\in\calE$ and $j\goesto i$ if $(j,i)\in\calE$. With such notations, from \eqref{eqn:uc_ace}, we have
\begin{IEEEeqnarray}{rCl}
0&=&\bff{1}_{\calP'}^TECf^*\nonumber\\
 &=&\sum_{l:\calN_l\in\calP'}\sum_{i\in\calN_l}\paren{\sum_{j:j\goesto i}f^*_{ji} - \sum_{j:i\goesto j} f^*_{ij}}\nonumber\\
 &=&\sum_{i:i\in\calC_1}\paren{\sum_{j:j\goesto i}f^*_{ji} - \sum_{j:i\goesto j} f^*_{ij}}\nonumber\\
 &=&f^*_e+\sum_{i:i\in\calC_1}\paren{\sum_{j:j\goesto i,j\in\calC_1}f^*_{ji} - \sum_{j:i\goesto j, j\in\calC_1} f^*_{ij}},
  \label{eqn:uc_seperation_last_step}
\end{IEEEeqnarray}
where \eqref{eqn:uc_seperation_last_step} is because the only edge with one endpoint in $\calC_1$ and the other endpoint in $\calC_2$ is $e$. Note that
\begin{IEEEeqnarray*}{rCl}
&&\sum_{i:i\in\calC_1}\paren{\sum_{j:j\goesto i,j\in\calC_1}f^*_{ji} - \sum_{j:i\goesto j, j\in\calC_1} f^*_{ij}}\\
&=&\sum_{(i,j)\in\calE_1}\paren{f^*_{ij} - f^*_{ij}}\\
&=&0,
\end{IEEEeqnarray*}
where $\calE_1$ is the set of edges with both endpoints in $\calC_1$. From \eqref{eqn:uc_seperation_last_step}, we see that $f^*_e=0$.

Since the bridge $e$ is arbitrary, we have thus proved the desired result.

\qed
\section{Proof of Proposition \ref{prop:localizability}}\label{section:proof_of_localizability}
We now prove the core step as mentioned in the main body of the paper. To simplify the notations, we drop the stage index $(1)$ from $x^*$ and denote $x^*=(\omega^*,d^*,f^*,\theta^*)$. Put $\tilde{x}^*=(\tilde{d}^*, \tilde{f}^*, \tilde{\theta}^*)$. From the way that $\tilde{x}^*$ is constructed, the constraints \eqref{eqn:uc_ace} are easily seen to be satisfied. If we can show that $\tilde{f}^*=BC^T\tilde{\theta}^*$, then since $\tilde{\theta}^*$ is obtained by solving the DC power flow equations from $CBC^T\tilde{\theta}^*=r-\tilde{d}^*$, the constraints \eqref{eqn:uc_balance} and \eqref{eqn:dcflow} are also satisfied. Now we show that $\tilde{f}^*=BC^T\tilde{\theta}^*$ indeed holds. 

To do so, we first establish the following lemma:
\begin{lemma}\label{lemma:uc_injection_vanish}
For any tree-partition region $\calN_z$ in $\calP$, we have
$$
\sum_{j\in\calN_z}\paren{r_j-d^*_j}=\sum_{j\in\calN_z}\paren{r_j-\tilde{d}^*_j}=0.
$$
\end{lemma}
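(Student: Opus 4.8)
The plan is to obtain the first identity directly from feasibility of $x^*$, then to reduce the second identity to the first on every region other than $\calN_l$, and finally to dispatch the region $\calN_l$ using the way the initial failure enters the model. For the first identity, note that since $\calB(1)$ is non-critical, the operating point $x^*=(\omega^*,d^*,f^*,\theta^*)$ satisfies every constraint of \eqref{eqn:uc_olc} over $\calG(1)$. In particular the power-balance constraint \eqref{eqn:uc_balance} reads $Cf^*=r-d^*$ and the zero-area-control-error constraint \eqref{eqn:uc_ace} reads $ECf^*=0$; composing the two gives $E(r-d^*)=ECf^*=0$. By the definition of $E$, its $z$-th row is the $0/1$ indicator vector of $\calN_z$, so the $z$-th coordinate of $E(r-d^*)$ is exactly $\sum_{j\in\calN_z}(r_j-d^*_j)$, and hence this sum vanishes for every region $\calN_z$.

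For the second identity, recall that $\tilde x^*$ is built from $x^*$ by replacing $d^*_j$ with $0$ for $j\in\calN_l$ (and $f^*_e$ with $0$ on the edges internal to $\calN_l$, and re-solving the DC power flow for $\tilde\theta^*$); in particular $\tilde d^*_j=d^*_j$ for every $j\notin\calN_l$. Consequently, for any region $\calN_z$ with $z\neq l$ we have $\sum_{j\in\calN_z}(r_j-\tilde d^*_j)=\sum_{j\in\calN_z}(r_j-d^*_j)=0$ by the first identity. For $z=l$, $\tilde d^*_j=0$ for all $j\in\calN_l$, so $\sum_{j\in\calN_l}(r_j-\tilde d^*_j)=\sum_{j\in\calN_l}r_j$, and it remains to check that this equals $0$. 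The disturbance $r$ associated with the initial failure $\calB(1)$ is supported on the endpoints of the tripped lines: by linearity of the DC power flow, removing a line carrying pre-failure flow $\phi$ is equivalent to leaving it in place while injecting $+\phi$ at one of its endpoints and $-\phi$ at the other. Since $\calN_l$ is not associated with $\calB(1)$, no line of $\calB(1)$ has an endpoint in $\calN_l$, so $r_j=0$ for all $j\in\calN_l$, and in particular $\sum_{j\in\calN_l}r_j=0$.

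The only step that is not routine bookkeeping with the incidence matrix is the last one: justifying that the disturbance $r$ vanishes on every region not associated with $\calB(1)$. I would make this rigorous by recalling, as in Section~\ref{section:previous_model}, how a line failure is encoded as an equivalent injection deviation, and by using that under the standing assumption $\calP^{\text{tree}}=\calP^{\text{UC}}$ the only couplings between a region and the rest of the grid are the injection deviations at buses incident to the tripped lines and the tie-line flows --- the latter already pinned down by \eqref{eqn:uc_ace}. Once this is in hand, the rest of the lemma is just reading off coordinates of the identity $E(r-d^*)=ECf^*=0$.
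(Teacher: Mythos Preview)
Your argument is correct and follows essentially the same route as the paper: both establish the first identity via $E(r-d^*)=ECf^*=0$ and then split the second identity into the case $z\neq l$ (where $\tilde d^*=d^*$ by construction) and the case $z=l$ (where $\tilde d^*_j=0$ and $r_j=0$ because $\calN_l$ is not associated with $\calB(1)$). Your extra paragraph justifying $r_j=0$ on $\calN_l$ via the injection-equivalent encoding of a line trip is more explicit than the paper, which simply asserts this fact, but the underlying reasoning is identical.
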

\begin{proof}
Let $\bff{1}_{\calN_z}\in\R^{\abs{\calN}}$ be the characteristic vector of $\calN_z$, that is, the $j$-th component of $\bff{1}_{\calN_z}$ is $1$ if $j\in\calN_z$ and $0$ otherwise. Summing \eqref{eqn:uc_balance} over $j\in\calN_z$, we have:
\begin{IEEEeqnarray*}{rCl}
\sum_{j\in\calN_z}\paren{r_j-d^*_j}&=&\bff{1}_{\calN_z}^TCf=(ECf)_z=0,
\end{IEEEeqnarray*} 
where $(ECf)_z$ is the $z$-th row of $ECf$.

For $\calN_z$ that is different from $\calN_l$, we have $\tilde{d}^*_j=d^*_j$ for any $j\in\calN_z$ by construction. Thus for such $\calN_z$ we also have
$$
\sum_{j\in\calN_z}\paren{r_j-\tilde{d}^*_j}=0.
$$

For $\calN_l$, since $\calN_l$ is not associated with $\calB(1)$, we have $r_j=0$ for $j\in\calN_l$. Moreover, by construction we also know that $\tilde{d}^*_j=0$ for $j\in\calN_l$. As a result
$$
\sum_{j\in\calN_l}\paren{r_j-\tilde{d}^*_j}=0.
$$

This completes the proof.
\end{proof}

Now consider a region $\calN_w$ that is different from $\calN_l$. In this case, we do not change the injection from $x^*$ when constructing $\tilde{x}^*$, thus $d_j^*-\tilde{d}_j^*=0$ for all $j\in\calN_w$. From Lemma \ref{lemma:uc_injection_vanish}, we see that $\sum_{j\in\calN_z}\paren{d_j^*-\tilde{d}_j^*}=0$ for all $z$. Since $d^*$ and $\theta^*$ conform to the DC power flow equations, we have
$$
CBC^T\theta^*=r-d^*
$$
and thus
$$
CBC^T\paren{\theta^* - \tilde{\theta}^*}=\tilde{d}^* - d^*.
$$
By Lemma \ref{lemma:laplace_solvability}, we then have $\theta^*_j-\tilde{\theta}^*_j$ is a constant over $\ol{\calN}_w$, and thus 
$$
\tilde{\theta}^*_i - \tilde{\theta}^*_j=\theta^*_i -\theta^*_j 
$$
for all $i, j\in\ol{\calN}_w$. This in particular implies
$$
\tilde{f}^*_e = f^*_e =B_e(\theta^*_i -\theta^*_j)= B_e(\tilde{\theta}^*_i - \tilde{\theta}^*_j)
$$
for all $e=(i,j)$ such that $i\in\calN_w$ or $j\in\calN_w$.

Next let us consider the region $\calN_l$. In this region, we have $\tilde{d}^*_j=0$ by construction. Moreover, since $\calN_l$ is not associated with $\calB(1)$, we know $r_j=0$ for all $j\in\calN_l$. Thus $r_j-\tilde{d}^*_j=0$ for all $j\in\calN_l$. Further, from Lemma \ref{lemma:uc_injection_vanish} we have $\sum_{j\in\calN_z}\paren{r_j-\tilde{d}_j^*}=0$ for all $z$. Thus by Lemma \ref{lemma:laplace_solvability} and $CBC^T\tilde{\theta}^*=r-\tilde{d}^*$, we know $\tilde{\theta}^*_i=\tilde{\theta}^*_j$ for all $i,j\in\ol{\calN}_l$. This implies that for any edge $e=(i,j)$ within $\calN_l$, we have
$$
\tilde{f}^*_e=0=B_e(\tilde{\theta}^*_i-\tilde{\theta}^*_j).
$$

As a result, we see that $\tilde{f}^*_e=B_e(\tilde{\theta}^*_i-\tilde{\theta}^*_j)$ holds for all $e\in\calE$. This completes the proof. \qed
\section{Proof of Lemma \ref{lemma:laplace_solvability}}\label{section:proof_of_laplace_solvability}
It is well-known that the Laplacian matrix $L:=CBC^T$ of a connected graph $\calG=(\calN,\calE)$ has rank $\abs{\calN} - 1$, and $Lx=b$ is solvable if and only if $\bff{1}^Tb=0$, where $\bff{1}$ is the vector with a proper dimension that consists of ones. Moreover, the kernel of $L$ is given by $\spann(\bff{1})$.

If $\calN_1$ is the only region in $\calP$, then $b=0$ since $b_j=0$ for all $j\in\calN_1$. We thus know the solution space to $Lx=b$ is exactly the kernel of $L$, and the desired result holds.

If $\calN_1$ is not the only region in $\calP$, we can then find a bus that does not belong to $\calN_1$, say bus $z$. 
Without loss of generality, assume the bus $z\in\calN_k$.
Consider a solution $x$ to $Lx=b$. Since the kernel of $L$ is $\spann(\bff{1})$,  we can without loss of generality assume the last component of $x$ is $0$. Let $\ol{L}$ be the submatrix of $L$ obtained by removing its last row and last column, and similarly let $\ol{x}$ and $\ol{b}$ be the vectors obtained by removing the last component of $x$ and $b$, respectively. Then $\ol{L}$ is invertible (since it is a principal submatrix, see \cite{guo2017monotonicity}), and we have
$$
\ol{L}\ol{x}=\ol{b}.
$$

Denote the matrix obtained by deleting the $l$-th row and $i$-th column of $\ol{L}$ by $\ol{L}^{li}$, then Proposition V.2 of \cite{guo2017monotonicity} shows that
\begin{equation}\label{eqn:graph_interpretation}
\det\paren{\ol{L}^{li}}=(-1)^{l+i}\sum_{E\in\calT(\set{l,i}, \set{z})}\chi(E),
\end{equation}
where $\chi(E)=\prod_{e\in E} B_e$ and $\calT(\set{l,i}, \set{z})$ is the set of spanning forests of $\calG$ that consists of exactly two trees containing $\set{l,i}$ and $\set{z}$ respectively. We refer the readers to \cite{guo2017monotonicity} for a detailed discussion on how to interpret these notations.

To state some useful results derived from  \eqref{eqn:graph_interpretation}, we introduce the following definition of directly connected regions.
\begin{defn}
For a tree-partition $\calP=\set{\calN_1,\calN_2,\cdots, \calN_k}$ of $\calG$, we say $\calN_v$ and $\calN_w$ are \emph{\textbf{directly connected without $\calN_l$}} if the path from $\calN_v$ to $\calN_w$ in $\calG_\calP$ does not contain $\calN_l$. 
\end{defn}
The path from $\calN_v$ to $\calN_w$ in the above definition is unique since $\calG_\calP$ forms a tree. As an example, in Fig.~\ref{fig:tree_partition}, $\calN_1$ and $\calN_2$ are directly connected without $\calN_3$, yet $\calN_2$ and $\calN_3$ are not directly connected without $\calN_1$.

In the following proofs, we need to refer to paths in both the original graph $\calG$ and the reduced graph $\calG_\calP$. To clear potential confusions, we agree the following terminologies: Given two sets of nodes $\calN_v$ and $\calN_w$ (that can be different from the tree-partition regions in $\calP$) of $\calG$, a path in $\calG$ from $\calN_v$ to $\calN_w$ refers to a path consisting of nodes (and lines) from the original graph $\calG$ whose starting node belongs to $\calN_v$ and ending node belongs to $\calN_w$. Given two tree-partition regions $\calN_v$ and $\calN_w$, a path in $\calG_\calP$ from $\calN_v$ to $\calN_w$ refers to a path consisting of nodes (and lines) from the reduced graph $\calG_\calP$ whose starting node is $\calN_v$ and ending node is $\calN_w$. Since there is a natural correspondance between bridges in $\calG$ and lines in $\calG_\calP$, if a line $e$ in $\calG_\calP$ is contained in a path $P$ in $\calG_\calP$, we also say the corresponding bridge $\tilde{e}$ from $\calG$ is contained in $P$.

\begin{lemma}\label{lemma:not_directly_connected}
Assume $\calN_2$ and $\calN_k$ are not directly connected without $\calN_1$. If $l_1,l_2\in\calN_2$ and $i\in \ol{\calN}_1$, then
$$
\calT(\set{l_1,i}, \set{z}) = \calT(\set{l_2,i},\set{z}).
$$
\end{lemma}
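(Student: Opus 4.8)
The plan is to prove that the two sets coincide by a symmetric double inclusion: it is enough to show $\calT(\set{l_1,i},\set{z})\subseteq\calT(\set{l_2,i},\set{z})$ and then exchange the roles of $l_1$ and $l_2$. For this inclusion, fix $F\in\calT(\set{l_1,i},\set{z})$, i.e. a spanning forest $F=T\sqcup T'$ of $\calG$ with exactly two trees such that $l_1,i\in V(T)$ and $z\in V(T')$. Since $F$ has only these two trees, it suffices to prove $l_2\in V(T)$, for then $l_2,i\in V(T)$ and $z\in V(T')$.

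The structural input is the tree shape of $\calG_\calP$. Because $\calN_2$ (a region other than $\calN_1$) and $\calN_k$ are not directly connected without $\calN_1$, the path in $\calG_\calP$ from $\calN_2$ to $\calN_k$ contains $\calN_1$; I would let $b$ be the edge of that path incident to $\calN_1$ on the $\calN_2$ side, and $\tilde b=(p,q)\in\calE$ the corresponding bridge with $p\in\calN_1$. Deleting $b$ from the tree $\calG_\calP$ splits it into two subtrees; let $X\subseteq\calN$ be the union of the regions lying in the subtree that contains $\calN_2$, and $Y:=\calN\setminus X$. I would then record three facts, all immediate from the one-to-one correspondence between bridges of $\calG$ and edges of the (simple) tree $\calG_\calP$ and from the definitions: (i) $\tilde b$ is the unique edge of $\calG$ with one endpoint in $X$ and the other in $Y$; (ii) $\calN_2\subseteq X$ while $\calN_1,\calN_k\subseteq Y$, so in particular $l_1,l_2,q\in X$ and $z,p\in Y$; (iii) every node of $\partial\calN_1$ lies in $Y$ or equals $q$, hence $\ol\calN_1\subseteq Y\cup\set{q}$.

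With these in hand the conclusion is a short contradiction. Suppose $l_2\in V(T')$. Since $T'$ is connected and contains $l_2\in X$ and $z\in Y$, the $T'$-path joining $l_2$ and $z$ is a path of $\calG$ from $X$ to $Y$ and so, by (i), uses the edge $\tilde b$; hence $q\in V(T')$. On the other hand $i\in\ol\calN_1\subseteq Y\cup\set{q}$ by (iii): if $i=q$ then $q=i\in V(T)$; if $i\in Y$, then (noting $i\ne l_1$ since $l_1\in X$) the $T$-path joining $l_1\in X$ to $i\in Y$ also uses $\tilde b$ by (i), so again $q\in V(T)$. In either case $q\in V(T)\cap V(T')=\emptyset$, which is absurd; therefore $l_2\in V(T)$, completing the inclusion and the proof.

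The step I expect to require the most care is establishing (i)--(iii): fact (i) amounts to observing that any edge of $\calG$ crossing the partition $(X,Y)$ projects to an edge of $\calG_\calP$ crossing the edge-cut $\set{b}$ of the tree, which forces it to be $\tilde b$, and (iii) is a one-line consequence of (i) together with $p\in\calN_1$. These are exactly the flavour of bridge/tree-partition facts already used in \cite{guo2018failure,guo2017monotonicity}, so they should be routine but are the only nontrivial input; everything after that is the two-line argument above.
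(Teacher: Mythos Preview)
Your proof is correct and follows essentially the same approach as the paper. Both arguments identify the same cut vertex (your $q$, the paper's $w$: the non-$\calN_1$ endpoint of the bridge incident to $\calN_1$ on the $\calN_2$ side of the $\calN_2$--$\calN_k$ path in $\calG_\calP$) and observe that any spanning forest splitting $\{l_1,i\}$ from $\{l_2,z\}$ would force this vertex into both trees; the only cosmetic difference is that the paper packages this as $\calT(\{l_1,i\},\{l_2,z\})=\emptyset$ together with the disjoint-union decomposition $\calT(\{l_1,i\},\{z\})=\calT(\{l_1,l_2,i\},\{z\})\sqcup\calT(\{l_1,i\},\{l_2,z\})$, whereas you argue the inclusion directly via the explicit bipartition $(X,Y)$.
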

\begin{proof}
The path from $\calN_1$ to $\calN_k$ in $\calG_\calP$ contains a bridge in $\calG$ that incidents to $\calN_1$. Denote this bridge as $\tilde{e}$ and let $w$ be the endpoint of $\tilde{e}$ that is not in $\calN_1$. Then it is easy to check that $w$ is a cut node that any path from $\ol{\calN}_1$ to $\calN_2$ in $\calG$ must contain.

Since $\calN_2$ and $\calN_k$ are not directly connected without $\calN_1$, the path from $\calN_2$ to $\calN_k$ in $\calG_\calP$ passes through $\calN_1$. In other words, any path in $\calG$ from $\calN_2$ to $\calN_k$ must pass through a certain node in $\calN_1$, and thus contains a sub-path in $\calG$ from $\calN_1$ to $\calN_k$. This implies that $w$ is contained in any path in $\calG$ from $\calN_2$ to $\calN_k$.

Note that any tree containing $i\in\ol{\calN}_1$ and $l_1\in\calN_2$ induces a path in $\calG$ from $\ol{\calN}_1$ to $\calN_2$ and thus contains $w$. Further, any tree containing $l_2\in\calN_2$ and $z\in\calN_k$ induces a path from $\calN_2$ to $\calN_k$ in $\calG$, and thus also contains $w$. As a result, these two types of trees always share a common node $w$ and cannot be disjoint:
$$
\calT(\set{l_1,i}, \set{l_2,z})= \emptyset.
$$
Similarly $\calT(\set{l_2,i}, \set{l_1,z})= \emptyset$. Therefore
\begin{IEEEeqnarray*}{rCl}
&&\calT(\set{l_1,i}, \set{z})\\
&=&\calT(\set{l_1, l_2, i}, \set{z})\sqcup\calT(\set{l_1,i}, \set{l_2, z})\\
&=&\calT(\set{l_1, l_2, i}, \set{z})\sqcup\calT(\set{l_2,i}, \set{l_1, z})\\
&=&\calT(\set{l_2,i}, \set{z}),
\end{IEEEeqnarray*}
where $\sqcup$ means disjoint union. The desired result then follows.
\end{proof}
\begin{lemma}\label{lemma:directly_connected}
Assume $\calN_2$ and $\calN_k$ are directly connected without $\calN_1$. If $l\in\calN_2$ and $i_1, i_2 \in\ol{\calN_1}$, then
$$
\calT(\set{l,i_1}, \set{z}) = \calT(\set{l,i_2},\set{z}).
$$
\end{lemma}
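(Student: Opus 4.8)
The plan is to follow the same template as the proof of Lemma~\ref{lemma:not_directly_connected}: decompose the two families of spanning forests and kill the ``cross'' terms. First I would split $\calT(\set{l,i_1},\set{z})$ according to which of the two trees of a spanning forest the node $i_2$ lands in. A spanning forest in $\calT(\set{l,i_1},\set{z})$ consists of two trees, one containing $\set{l,i_1}$ and one containing $z$, and $i_2$ must belong to exactly one of them, so
$$\calT(\set{l,i_1},\set{z}) \;=\; \calT(\set{l,i_1,i_2},\set{z}) \;\sqcup\; \calT(\set{l,i_1},\set{i_2,z}),$$
and symmetrically $\calT(\set{l,i_2},\set{z}) = \calT(\set{l,i_1,i_2},\set{z}) \sqcup \calT(\set{l,i_2},\set{i_1,z})$. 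It therefore suffices to show that the cross families $\calT(\set{l,i_1},\set{i_2,z})$ and $\calT(\set{l,i_2},\set{i_1,z})$ are empty, and by the symmetry $i_1\leftrightarrow i_2$ only one of them needs to be treated.

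To do that I would exhibit a single vertex $w$ that lies on every $\calG$-path from $\ol\calN_1$ to $\calN_2$ and on every $\calG$-path from $\ol\calN_1$ to $\calN_k$. Because $\calN_2$ and $\calN_k$ are directly connected without $\calN_1$, in the tree $\calG_\calP$ they lie in the same component of $\calG_\calP\setminus\calN_1$; hence the paths in $\calG_\calP$ from $\calN_1$ to $\calN_2$ and from $\calN_1$ to $\calN_k$ leave $\calN_1$ through the same edge, i.e.\ they both contain the same bridge $\tilde e$ of $\calG$ incident to $\calN_1$. Let $\calN_m$ be the region on the far side of $\tilde e$ and let $w\in\calN_m$ be the endpoint of $\tilde e$ not in $\calN_1$. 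Deleting the corresponding edge from the tree $\calG_\calP$ splits the regions into a part $S_1$ containing $\calN_1$ and a part $S_m$ containing $\calN_m$, $\calN_2$, $\calN_k$ and $w$; the only edge of $\calG$ joining $S_1$ to $S_m$ is $\tilde e$, and — exactly as in the ``it is easy to check'' step of Lemma~\ref{lemma:not_directly_connected} — one verifies $\ol\calN_1\subseteq S_1\cup\set{w}$. Consequently any $\calG$-path from a node of $\ol\calN_1$ into $\calN_2$ (resp.\ into $\calN_k$) must cross from $S_1$ to $S_m$ through $\tilde e$, hence passes through $w$.

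With $w$ in hand the cross families are empty: a spanning forest in $\calT(\set{l,i_1},\set{i_2,z})$ has one tree $T_1$ containing $i_1\in\ol\calN_1$ and $l\in\calN_2$, so the path inside $T_1$ between them is a $\calG$-path from $\ol\calN_1$ to $\calN_2$ and therefore contains $w$, forcing $w\in T_1$; the other tree $T_2$ contains $i_2\in\ol\calN_1$ and $z\in\calN_k$, so likewise $w\in T_2$. Since $T_1$ and $T_2$ are vertex-disjoint this is impossible, so $\calT(\set{l,i_1},\set{i_2,z})=\emptyset$, and the same argument with $i_1$ and $i_2$ interchanged gives $\calT(\set{l,i_2},\set{i_1,z})=\emptyset$. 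Substituting into the two decompositions above yields $\calT(\set{l,i_1},\set{z}) = \calT(\set{l,i_1,i_2},\set{z}) = \calT(\set{l,i_2},\set{z})$, which is the assertion of the lemma.

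The main obstacle I expect is the bookkeeping around $\ol\calN_1=\calN_1\cup\partial\calN_1$ rather than just $\calN_1$: a boundary node of $\calN_1$ could in principle sit inside $\calN_2$ or $\calN_k$ (precisely when one of those regions is the neighbor $\calN_m$ of $\calN_1$ on the common path), so I must argue $\ol\calN_1\subseteq S_1\cup\set{w}$ carefully. This step relies on $\calG_\calP$ being a tree, so that there is a unique bridge between $\calN_1$ and each neighboring region, which pins down the single ``escape'' vertex $w$; once that inclusion is in place, the rest is formal.
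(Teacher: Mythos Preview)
Your proposal is correct and follows essentially the same approach as the paper: identify a cut vertex $w$ (the non-$\calN_1$ endpoint of the bridge through which $\calN_1$ connects to both $\calN_2$ and $\calN_k$), show every $\calG$-path from $\ol\calN_1$ to $\calN_2$ or $\calN_k$ passes through $w$, and use this to kill the cross families in the decomposition. Your argument that the two $\calG_\calP$-paths leave $\calN_1$ through the same edge (because $\calN_2,\calN_k$ lie in the same component of $\calG_\calP\setminus\calN_1$) is a slightly more direct version of the paper's proof-by-contradiction for the same fact, and your explicit attention to the inclusion $\ol\calN_1\subseteq S_1\cup\set{w}$ is exactly the bookkeeping the paper leaves as ``easy to check.''
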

\begin{proof}
The path from $\calN_1$ to $\calN_k$ in $\calG_\calP$ (denoted as $P_1$) contains a bridge in $\calG$ that incidents to $\calN_1$. Denote this bridge as $\tilde{e}$ and let $w$ be the endpoint of $\tilde{e}$ that does not belong to $\calN_1$. Then it is easy to check that $w$ is a cut node that any path in $\calG$ from $\ol{\calN}_1$ to $\calN_k$ must pass through.

We claim that if $\calN_2$ and $\calN_k$ are directly conncted without $\calN_1$, then any path from $\ol{\calN}_1$ to $\calN_2$ in $\calG$ must also contain $w$. Indeed, suppose not, then the path from $\calN_1$ to $\calN_2$ in $\calG_\calP$ (denoted as $P_2$) contains a bridge in $\calG$ that incidents to $\calN_1$, and this bridge is different from $\tilde{e}$. If $P_1$ and $P_2$ do not have any common super nodes, then concatenating the two paths induces a path in $\calG_\calP$ from $\calN_2$ to $\calN_k$ that passes through $\calN_1$. In other words, the path from $\calN_2$ to $\calN_k$ in $\calG_\calP$ passes through $\calN_1$, contradicting the assumption that $\calN_2$ and $\calN_k$ are directly connected without $\calN_1$. Therefore, $P_1$ and $P_2$ share a common node, say $\calN_3$. However, $P_1$ and $P_2$ induce two different sub-paths in $\calG_\calP$ from $\calN_1$ to $\calN_3$, contracting the assumption that $\calG_\calP$ forms a tree. We thus have proved the claim.
 
Finally, note that any tree containing $i_1\in\ol{\calN}_1$ and $l\in\calN_2$ induces a path in $\calG$ from $\ol{\calN}_1$ to $\calN_2$ and thus contains $w$. Further, any tree containing $i_2\in\ol{\calN}_1$ and $z\in\calN_k$ induces a path in $\calG$ from $\ol{\calN}_1$ to $\calN_k$ and thus contains $w$. Therefore these two types of trees always share a common node $w$ and cannot be disjoint:
$$
\calT(\set{l,i_1},\set{i_2, z})=\emptyset.
$$
Similarly $\calT(\set{l,i_2},\set{i_1, z})=\emptyset$. As a result,
\begin{IEEEeqnarray*}{rCl}
&&\calT(\set{l,i_1}, \set{z})\\
&=&\calT(\set{l,i_1, i_2}, \set{z})\sqcup\calT(\set{l,i_1}, \set{i_2, z})\\
&=&\calT(\set{l,i_1, i_2}, \set{z})\\
&=&\calT(\set{l,i_1, i_2}, \set{z})\sqcup\calT(\set{l,i_2}, \set{i_1, z})\\
&=&\calT(\set{l,i_2}, \set{z}).
\end{IEEEeqnarray*}
\end{proof}

Now since $b_k=\ol{b}_k=0$ for all $k\in\calN_1$, by Cramer's rule, we have
\begin{equation}\label{eqn:component_inversion}
x_i=\ol{x}_i=\frac{\sum_{l\notin \calN_1}(-1)^{l+i}b_k\det\paren{\ol{L}^{li}}}{\det\paren{\ol{L}}}
\end{equation}
for all $i$.

Let $\calP_1$ be set of the regions in $\calP$ that are directly connected to $\calN_k$ without $\calN_1$ and let $\calP_2$ be the remaining regions. For a region $\calN_l\in\calP_1$, let
$$
\chi(\calN_l):=\sum_{E\in\calT(\set{\tilde{l},i},\set{z})}\chi(E),
$$
where $\tilde{l}$ is an arbitrary bus in $\calN_l$. $\chi(\calN_l)$ is well-defined by Lemma \ref{lemma:not_directly_connected}. This together with the assumption $\sum_{j\in\calN_l}b_j=0$ then implies
\begin{IEEEeqnarray*}{rCl}
&&\sum_{l\in \calN_l}(-1)^{l+i}b_l\det\paren{\ol{L}^{li}}\\
&=&\sum_{l\in \calN_l}b_l\paren{\sum_{E\in\calT(\set{l,i},\set{z})}\chi(E)}=\sum_{l\in \calN_l}b_l\chi(\calN_l)\\
&=&\chi(\calN_l)\sum_{l\in\calN_l}b_l\\
&=&0.
\end{IEEEeqnarray*}

As a result
\begin{IEEEeqnarray*}{rCl}
&&\sum_{l\notin \calN_1}(-1)^{l+i}b_l\det\paren{\ol{L}^{li}}\\
&=&\sum_{\calN_l\in\calP_1}\sum_{l\in\calN_l}(-1)^{l+i}b_l\det\paren{\ol{L}^{li}}\\
&&+\sum_{\calN_l\in\calP_2}\sum_{l\in\calN_l} (-1)^{l+i}b_l\det\paren{\ol{L}^{li}}\\
&=&\sum_{\calN_l\in\calP_2}\sum_{l\in\calN_l} (-1)^{l+i}b_l\det\paren{\ol{L}^{li}}\\
&=&\sum_{\calN_l\in\calP_2}\sum_{l\in\calN_l}b_l\paren{\sum_{E\in\calT(\set{l,i},\set{z})}\chi(E)},
\end{IEEEeqnarray*}
which by Lemma \ref{lemma:directly_connected} takes the same value for all $i\in\ol{\calN}_1$. In other words, the equation \eqref{eqn:component_inversion} takes the same value for all $i\in\ol{\calN}_1$. This completes the proof. \qed

\section{Proof of Proposition \ref{prop:dual_to_infinity}}\label{section:proof_of_blow_up}
Frist, let us put $x=[f, d ,\theta]\in\R^{2\abs{\calN}+\abs{\calE}}$ to collect all the decision variables of the UC optimization \eqref{eqn:uc_olc} and rewrite it to a more generic form:
\begin{subequations}\label{eqn:pd_opt}
\begin{IEEEeqnarray}{ll}
\min_{\ul{d} \le d \le \ol{d}} \quad & c(d) \label{eqn:pd_obj}\\
\hspace{.2cm}\text{s.t.} & Ax \le g \label{eqn:pd_ineq}\\
& Cx = h,\label{eqn:pd_eq}
\end{IEEEeqnarray}
\end{subequations}
where $A, C, g, h$ are matrices (vectors) of proper dimensions from the optimization \eqref{eqn:uc_olc}. Let $\lambda_1,\lambda_2$ be the corresponding dual variables to \eqref{eqn:pd_ineq} and \eqref{eqn:pd_eq} respectively and $\lambda:=[\lambda_1;\lambda_2]$ ($[\cdot;\cdot]$ here means matrix concatenation as a column), we can then write the Lagrangian for \eqref{eqn:pd_opt} as 
$$\calL(x, \lambda) = c(d)  + \lam_1^T (Ax-g) + \lam_2^T (Cx-h).$$
Now by the assumption UC3, we know that:
\begin{subequations}\label{eqn:pd_alg}
\begin{IEEEeqnarray}{rCl}
	\dot{\lam}_1 & = & [Ax-g]_{\lam_1}^+ \label{eqn:pd_lam}\\
	\dot{\lam}_2 & = & Cx-h, \label{eqn:pd_mu}
\end{IEEEeqnarray}
\end{subequations}
where the projection operator $[\cdot]_{\lam_1}^+$ is defined component-wise by
\begin{equation}
	([x]_{\lam_1}^+)_i := \begin{cases}
		x_i & \text{if } x_i > 0 \text{ or } \lam_{1,i} >0 \\
		0 & \text{otherwise.}
	\end{cases}
\end{equation}
Consider two closed convex sets $S_1 = \{x| Ax\le g, Cx = h\}$ and $S_2 = \{x|\ul{d} \le d \le \ol{d}\}$. If the optimization \eqref{eqn:uc_olc} is infeasible, then $S_1\cap S_2=\emptyset$, i.e., the sets $S_1$ and $S_2$ are disjoint. As a result, there exists a hyperplane that separates $S_1$ and $S_2$: $\exists p\in \R^{2\abs{\calN}+\abs{\calE}}, q \in \R$ such that 
$$
p^Tx > q, \forall x \in S_1 \text{ and } p^Tx \le q, \forall x \in S_2.
$$
This then implies the system 
$$
\begin{cases}
Ax\le g\\
Cx=h\\
p^Tx\le q
\end{cases}
$$
is not solvable. By Farkas' Lemma, we can then find $w_1, w_2, w_3$ of proper dimensions such that $w_1 \ge 0, w_3 \ge 0$, $A^T w_1 + C^T w_2 + p w_3 = 0$, and $g^T w_1 + h^T w_2 +  qw_3 =\eps< 0$.

Define $z = [w_1; w_2]$. We then see that under the UC controller, we have for any $t$:
\begin{subequations}\label{eqn:pd_infea}
	\begin{IEEEeqnarray}{rCl}
		z^T \dot{\lambda}(t)&= &w_1^T [Ax(t)-g]_\lam^+ + w_2^T (Cx(t)-h) \nonumber \\
		&\ge& w_1^T [Ax(t)-g]_\lam^+ + w_2^T (Cx(t)-h)\nonumber\\
		&& + w_3(p^T x(t) - q) \label{eqn:primal_in_range}\\
		&\ge & w_1^T (Ax(t)-g) + w_2^T (Cx(t)-h)\nonumber\\
		&& + w_3(p^T x(t) - q) \label{eqn:remove_bracket} \\
		&=&\paren{A^Tw_1  + C^Tw_2 + pw_3}x(t)\nonumber \\
		&& - \paren{w_1^T g + w_2^T h + w_3 q}\nonumber \\
		&=& 0 - \eps \nonumber\\
		&>&0,\nonumber
	\end{IEEEeqnarray}
\end{subequations}
where \eqref{eqn:primal_in_range} comes from $w_3\ge 0$ and the assumption UC1, which ensures $x(t)\in S_2$ and thus $p^Tx(t)-q \le 0$, and \eqref{eqn:remove_bracket} comes from $w_1\ge 0$ and the fact that $[x]_\lambda^+\ge x$ for all $x$ (the inequality is component-wise).

As a result, we see that 
$$
z^T \lambda(t) -z^T\lambda(0)> -\eps t
$$
and thus
$$
\lim_{t\goesto \infty} z^T\lambda(t) = \infty.
$$
Finally, by noting
$$
\lim_{t\goesto \infty} z^T\lambda(t) \le w_1^T\limsup_{t\goesto\infty}\abs{\lambda_1(t)} + \abs{w_2}^T\limsup_{t\goesto\infty}\abs{\lambda_2(t)},
$$
the desired result follows. \qed

}{}

\end{document}